\newtheorem{theorem}{Theorem}
\newtheorem{definition}{Definition}
\newtheorem{cor}{Corollary}
\newtheorem{corollary}[theorem]{Corollary}
\newtheorem{lemma}{Lemma}
\newfont{\mbb}{msbm10 scaled 1100}
\definecolor{red}{RGB}{255,0,0}
\definecolor{dred}{RGB}{210,20,50}
\definecolor{green}{RGB}{0,255,0}
\definecolor{dgreen}{RGB}{0,151,0}
\definecolor{lgreen}{RGB}{75,180,130}
\definecolor{blue}{RGB}{0,0,255}
\definecolor{lblue}{RGB}{30,160,240}
\definecolor{magenta}{RGB}{255,0,255}
\definecolor{orange}{RGB}{255,128,0}
\definecolor{dorange}{RGB}{255,165,0}
\definecolor{violet}{RGB}{207,74,221}
\definecolor{grey}{RGB}{191,191,191}
\def\MSE{{\rm MSE}} 
\def\diag{{\rm diag}} 
\def\tr{{\rm tr}} 
\def\E{{\rm E}} 
\def\Pr{{\rm Pr}} 
\def\N{{\mathds{N}}} 
\def\Z{{\mathds{Z}}} 
\def\R{{\mathds{R}}} 
\def\C{{\mathds{C}}} 
\newcommand{\alg}{\mathcal{A}}
\newcommand{\eps}{\varepsilon}
\renewcommand{\vec}[1]{\mathbf{#1}}
\newcommand{\junk}[1]{}
\DeclareMathOperator{\Lap}{Lap}
\def\b0{{\bf 0}}
\def\ba{{\bf a}}
\def\bb{{\bf b}}
\def\bof{{\bf f}}
\def\bh{{\bf h}}
\def\bx{{\bf x}}
\def\by{{\bf y}}
\def\bz{{\bf z}}
\def\bA{{\bf A}}
\def\bF{{\bf F}}
\def\bH{{\bf H}}
\def\bI{{\bf I}}
\DeclareMathOperator{\herdisc}{herdisc}
\DeclareMathOperator{\specLB}{specLB}
\DeclareMathOperator{\polylog}{polylog}
\begin{document}

\title{Nearly Optimal Private Convolution}
\author{Nadia Fawaz\thanks{Technicolor, Palo Alto CA,
    nadia.fawaz@technicolor.com} \and S. Muthukrishnan\thanks{Rutgers
    University, Piscataway NJ, muthu@cs.rutgers.edu} \and Aleksandar
  Nikolov\thanks{Rutgers University, Piscataway NJ, anikolov@cs.rutgers.edu}}

\maketitle
\begin{abstract}

We study computing the convolution of a private input $x$ with a public input $h$, while satisfying the guarantees of $(\eps, \delta)$-differential privacy. Convolution is a fundamental operation, intimately related to Fourier Transforms.
,
In our setting, the private input may represent a time series of sensitive events or a histogram of a database of confidential personal information. Convolution then captures  important primitives including linear filtering, which is an essential tool in time series analysis, and aggregation queries on projections of the data.

We give a nearly optimal algorithm for computing convolutions while satisfying $(\eps, \delta)$-differential privacy. 
Surprisingly, we follow the simple strategy of adding independent Laplacian noise to each Fourier coefficient and bounding the privacy loss using the composition theorem from~\cite{dwork2010boosting}. We derive a closed form expression for the optimal noise to add to each Fourier coefficient using convex programming duality. Our algorithm is very efficient -- it is essentially no more computationally expensive than a Fast Fourier Transform. 
To prove near optimality, we use the recent discrepancy lowerbounds of~\cite{stoc} and derive a spectral lower bound using  a characterization of discrepancy in terms of determinants. 
\end{abstract}

\section{Introduction}

The \emph{noise complexity} of linear queries is of fundamental interest in
the theory of differential privacy. Consider a database that
represents users (or events) of $N$ different types (in the case of
events, a type is a time step). We may encode the
database as a vector $\vec{x}$ indexed by $\{1, \ldots, N\}$, where
$x_i$ gives the number of users of type $i$. A linear query asks for
the dot product $\langle \vec{a}, \vec{x} \rangle$; a \emph{workload}
of $M$ queries is given as a matrix $\vec{A}$, and the intended output
is $\vec{Ax}$. As the database often encodes personal information, we
wish to answer queries in a way that does not compromise the
individuals represented in the data. We adopt the now standard notion
of \emph{$(\eps, \delta)$-differential privacy}~\cite{DMNS}; informally, an
algorithm is differentially private if its output distribution does
not change drastically when a single user/event changes in the
database. This definition necessitates randomizition and
approximation, and, therefore, the question of the optimal accuracy of
any differentially private algorithm on a workload $\vec{A}$ comes
into the center. We discuss accuracy in terms of \emph{mean squared error} as
a measure of approximation: the expected average of squared error over
all $M$ queries.

The queries in a workload $\vec{A}$ can have different degrees of
correlation, and this poses different challenges for the private
approximation algorithm. In one extreme, when $\vec{A}$ is a set of
$\Omega(N)$ independently sampled random $\{0, 1\}$ (i.e.~counting)
queries, we know, by the seminal work of Dinur and
Nissim~\cite{dinur2003revealing}, that any $(\eps,
\delta)$-differentially private algorithm needs to incur at least
$\Omega(N)$ squared error per query on average. On the other hand, if
$\vec{A}$ consists of the same counting query repeated $M$ times, we
only need to add $O(1)$ noise per query~\cite{DMNS}. While those two
extremes are well understood -- the bounds cited above are tight --
little is known about workloads of queries with some, but not perfect,
correlation.

The \emph{convolution}\footnote{Here we define circular convolution,
  but, however, as discussed in the paper, our results generalize to other
  types of convolution, which are defined similarly.} \junk{(and convolution in general)} of
the private input $\vec{x}$
with a public vector $\vec{h}$ is defined as the vector $\vec{y}$
where 
$$y_i = \sum_{j = 1}^N{h_jx_{i - j \pmod N}}.$$
This convolution map is a workload of $N$ linear queries. Each query
is a circular shift of the previous one, and, therefore, the queries
are far from independent but not identical either. Convolution is a
fundamental operation that arises in algebraic computations such as 
polynomial multiplication. 
It is  a basic operation in signal
analysis and has well known connection to Fourier transforms.  Of primary interest to us, it is a
natural primitive in various applications:
\begin{itemize}
\item linear filters in the analysis of time series data can be cast
  as convolutions; as example applications, linear filtering
  can be used to isolate cycle components in time series data from
  spurious variations, and to compute time-decayed statistics of the
  data; 
\item when user type in the database is specified by $d$ binary
  attributes, aggregate queries such as $k$-wise marginals and
  generalizations can be represented as convolutions. 
\end{itemize}
Privacy concerns arise naturally in these applications: the time
series data can contain records of sensitive events, such as financial
transactions, records of user activity, etc.; some of the attributes
in a database can be sensitive, for example when dealing with
databases of medical data. \junk{Thus in studying differential privacy of
linear queries, the set corresponding to convolutions is a
particularly important case, from foundational and application points
of view.}

We give the first nearly optimal algorithm for computing convolution
under $(\eps, \delta)$-differential privacy constraints.  Our
algorithm gives the lowest mean squared error achievable by adding
independent (but non-uniform) Laplace noise to the Fourier
coefficients of $\vec{x}$ and bounding the privacy loss by the
composition theorem of Dwork et al.~\cite{dwork2010boosting}. Using
complementary slackness conditions, we derive a simple closed form for
the optimal amount of error that should be added in the direction of
each Fourier coefficient. We prove that, for any fixed $\vec{h}$, up
to polylogarithmic factors, \emph{any $(\eps, \delta)$-differential
  private algorithm} incurs at least as much squared error per query
as our algorithm. \emph{Somewhat surprisingly, our result shows that
  the simple strategy of adding indepdendent noise in the Fourier
  domain is nearly optimal for computing convolutions}. Prior to our
work there were known nearly instance-optimal\footnote{Note that
  instance-optimality here refers to the query vector $\vec{h}$, while
  we still consider worst-case error over the private input
  $\vec{x}$.} ($\eps, \delta$)-differentially private algorithm for a
natural class of linear queries. Additionally, our algorithm is
simpler and more efficient than related algorithms for $(\eps,
0)$-differential privacy. 

\junk{
Subsequently to our work, our results were generalized
by Nikolov, Talwar, and Zhang~\cite{geometry} to give instance-optimal
$(\eps, \delta)$-differentially a private algorithm for any workload
of linear queries. However, the generality of their result comes at a
cost in efficiency: their algorithm needs to compute a minimum
enclosing ellipsoid of a convex body and has running time at least
$O(M^2N)$. 
}

To prove optimality of our algorithm, we use the recent
discrepancy-based noise lower bounds of Muthukrishnan and
Nikolov~\cite{stoc}.  We use a
characterization of discrepancy in terms of determinants of
submatrices discovered by Lov\'{a}sz, Spencer, and Vesztergombi, together
with ideas by Hardt and Talwar, who give instance-optimal algorithms
for the stronger notion of $(\eps, 0)$-differential
privacy\footnote{Note that establishing instance-optimality for
  $(\eps, \delta)$-differential privacy is harder from error lower
  bounds perspective, as the privacy definition is weaker.}. A main
technical ingredient in our proof is a connection between the
discrepancy of a matrix $\vec{A}$ and the discrepancy of $\vec{PA}$
where $\vec{P}$ is an orthogonal projection operator.

In addition to applications to linear filtering, our algorithm allows
us to approximate marginal queries encoded by $w$-DNFs, which
generalize $k$-wise marginal queries. Using concentration results for
the spectrum of bounded-width DNFs, we derive a non-trivial error
bound for approximating $w$-DNF queries. The bound  is independent of the DNF
size.

\textbf{Related work}. The problem of computing private
convolutions has not been considered in the literature
before. However, there is a fair amount of work on the more general
problem of computing arbitrary linear queries, as well as some work on
special cases of convolution maps. 

The problem of computing arbitrary linear maps of a private database
histogram was first considered in the seminal work of Dinur and
Nissim~\cite{dinur2003revealing}. They showed that privately answering
$M$ random 0-1 queries on a universe of size $N$ requires $\Omega(N)$
mean squared error as long as $M = \Omega(N)$, and this bound is
tight. These bounds do not directly apply to our work, as a set of
independent random queries is not likely to encode a circular
convolution. Nevertheless, one can show, using spectral noise lower
bounds, that a convolution with a random 0-1 vector $h$ requires
assymptotically as much error as $N$ random queries. Yet, many
particular convolutions of interest require much less noise. This fact
motivates us to study algorithms for approximating the convolution $x
\ast h$ which are optimal 
for any given ${h}$. An efficient algorithm with this kind of instance
per instance (in terms of $h$) optimality gaurantee obviates the need
to develop specialized algorithms. Next we review some prior work on
special instances of convolution maps and also related work on
computing linear maps  optimally.

Bolot et al.~\cite{bolot2011private} give algorithms for various
decayed sum queries: window sums, exponentially and polynomially
decayed sums. Any decayed sum function is a type of linear filter,
and, therefore, a special case of convolution. Thus, our current work
gives a nearly optimal $(\eps, \delta)$-differentially private
approximation for \emph{any decayed sum function}. Moreover, as far as
mean squared error is concerned, our algorithms give improved error
bounds for the window sums problem: constant squared error per query.
However, unlike~\cite{bolot2011private}, we only consider the offline
batch-processing setting, as opposed to the online continual
observation setting.

The work of Barak et al.~\cite{barak2007privacy} on computing $k$-wise
marginals concerns a restricted class of convolutions (see
Section~\ref{sect:gen-apps}). Moreover,
Kasiviswanathan~\cite{kasiviswanathan2010price} show a noise lower
bound for $k$-wise marginals which is tight in the worst case. Our
work is a generalization: we are able to give nearly optimal
approximations to a wider class of queries, and our lower and upper
bounds nearly match for any convolution.

Li and Miklau~\cite{LiM12a,LiM12b} proposed the class of extended
matrix mechanisms, building on prior work on the matrix
mechanism~\cite{matrixmech}, and showed how to efficently compute the
optimal mechanism from the class. Furthermore, independently and
concurrently with our work, Cormode et al.~\cite{cormode2012}
considered adding optimal non-uniform noise to a fixed transform of
the private database. Since our mechanism is a special instance of the
extended matrix mechanism, the algorithms of Li and Miklau have
at most as much error as our algorithm. However, similarly
to~\cite{cormode2012}, we gain significantly in efficiency by fixing a
specific transform (in our case the Fourier transform) of the data and
computing a closed form expression for the optimal noise
magnitudes. Unlike the work of Li and Miklau and Cormode et al., we
are able to show nearly tight \emph{lower bounds} for \emph{any}
differentially private algorithm (not just the extended matrix
mechanism) and any set of convolution queries. Therefore, we can show
that the choice of the Fourier transform comes without loss of
generality for any set of convolution queries.

In the setting of $(\epsilon, 0)$-differential privacy, Hardt and
Talwar~\cite{hardt2010geometry} prove nearly optimal upper and lower
bounds on approximating $\vec{Ax}$ for any matrix $\vec{A}$. Recently,
their results were improved, and made unconditional by Bhaskara et
al.~\cite{12vollb}\junk{ (the original upper bound result relied on a
  conjecture in convex geometry)}.  Prior to our work a similar result
was not known for the weaker notion of approximate privacy,
i.e. $(\eps, \delta)$-differential privacy.  Subsequently to our work,
our results were generalized by Nikolov, Talwar, and
Zhang~\cite{geometry} to give nearly optimal algorithms for computing
any linear map $A$ under $(\eps,\delta)$-differential privacy. Their
work combined our use of hereditary discrepancy bounds on error
through the determinant lower bound with results from assymptotic
convex geometry.  The algorithms from~\cite{hardt2010geometry,12vollb}
are computationally expensive, as they need to sample from a
high-dimensional convex body\footnote{One of the best known algorithms
  is due to Lov\'{a}sz and Vempala~\cite{lovasz2006fast} and, ignoring
  other parameters, makes $\Theta(N^3)$ calls to a separation oracle,
  each of which would require solving a linear programming feasibility
  problem.}. Even the more efficient algorithm from~\cite{geometry}
has running time $\Omega(N^3)$, as it needs to approximate the minimum
enclosing ellipsoid of an $N$-dimensional convex body. By contrast our
algorithm's running time is dominated by the running time of the Fast
Fourier Transform, i.e. $O(N\log N)$, making it more suitable for
practical applications. Also, for some sets of queries, such as
running sums, our analysis gives tighter bounds than the analysis of
the algorithm in~\cite{geometry}.

A related line of work seeks to exploit sparsity assumptions on the
private database in order to reduce error; as we do not limit the
database size, our results are not directly comparable. Using our
histogram representation, database size corresponds to the norm
$\|\vec{x}\|_1$ where $\vec{x}$ is the database in histogram
representation. For general linear queries, the multiplicative weights
algorithm of Hardt and Rothblum achieves mean squared error $O(n
\sqrt{\log N})$ for $\|x\|_1 \leq n$. This bound is nearly tight for
random queries, but can be loose for special queries of interest. For
example, running sums require noise $O(\log^{O(1)} N)$, which is less
than $n$ except for $n$ very small in the universe size. In general,
algorithms which bound database size in order to bound error become
less useful when database size is large compared to the total number
of queries, and for very large databases algorithms such as ours are
still of interest. This is true also for the line of algorithms for
marginal queries which give error an arbitrary small constant fraction
of the database
size~\cite{gupta2011privately,hardt2012private,cheraghchi2012submodular,thaler2012faster}. Note
further that the optimal error for~\emph{a subset of all marginal
  queries} may be less than linear in database size, and our
algorithms will give near optimal error for the specific subset of
interest.

\textbf{Organization}. We begin with preliminaries on differential
privacy and convolution operators. In section~\ref{sec:LB} we derive
our main lower bound result, and in section~\ref{sec:UB} we describe
and analyze our nearly optimal algorithm. In
section~\ref{sect:gen-apps} we describe applications of our main
results.

\section{Preliminaries}\label{sect:prelims}


\emph{Notation:} $\N$, $\R$, and $\C$ are the sets of non-negative
integers, real, and complex numbers respectively. By $\log$ we denote
the logarithm in base $2$ while by $\ln$ we denote the logarithm in
base $e$.  Matrices and vectors are represented by boldface upper and
lower cases, respectively. $\bA^T$, $\bA^\ast$, $\bA^H$ stand for the
transpose, the conjugate and the transpose conjugate of $\bA$,
respectively. The trace and the determinant of $\bA$ are respectively
denoted by $\tr(\bA)$ and $\det(\bA)$. $\bA_{m:}$ denotes the $m$-th
row of matrix $\bA$, and $\bA_{:n}$ its $n$-th column. $\bA|_S$, where
$\bA$ is a matrix with $N$ columns and $S \subseteq [N]$, denotes the
submatrix of $\bA$ consisting of those columns corresponding to
elements of $S$.  $\lambda_{\bA}(1), \ldots, \lambda_{\bA}(n)$
represent the eigenvalues of an $n \times n$ matrix $\bA$. $\bI_N$ is
the identity matrix of size $N$. $\E[\cdot]$ is the statistical
expectation operator. $Lap(x, s)$ denotes the Laplace distribution
centered at $x$ with scale $s$, i.e. the distribution of the random
variable $x + \eta$ where $\eta$ has probability density function
 $p(y) \propto \exp(-|y|/s)$.

\subsection{Convolution}

In this section, we first give the definition of circular
convolution. We then recall important results on the Fourier
eigen-decomposition of convolution. Generalization to other notions of
convolution and applications are discussed in Section~\ref{sect:gen-apps}.


Let $x=\{x_0,\ldots,x_{N-1}\}$ be a real input sequence of length $N$, and $h=\{h_0,\ldots,h_{N-1}\}$ a
sequence of length $N$.
%
The circular convolution of $x$ and $h$ is the sequence $y= x \ast h$ of length $N$ defined by
\begin{equation}\label{eq:circConv}
y_k =  \sum_{n=0}^{N-1} x_n h_{(k-n)\bmod{N}}\mbox{, } \forall k\in\{0,\ldots, N-1\}.
\end{equation}

\begin{definition}
The $N \times N$ circular convolution matrix $\bH$ is defined as
\begin{equation}
\bH=\left[
            \begin{array}{*{6}{c}}
            h_0       & h_{N-1} & h_{N-2} & \ldots  & h_1      \\
            h_1       & h_0     & \ddots  & \ddots  & \vdots   \\
            h_2       & \ddots  & \ddots  & \ddots  & h_{N-2}  \\
            \vdots    & \ddots  & \ddots  & h_0     & h_{N-1}  \\
            h_{N-1}   & \ldots  & h_2     & h_1     & h_0
            \end{array}
          \right]_{N \times N}. \nonumber
\end{equation}
This matrix is a circulant matrix with first column $\bh=[h_0,\ldots,h_{N-1}]^T \in \R^N$, and its subsequent columns are successive cyclic shifts of its first column.
Note that $\bH$ is a normal matrix ($\bH\bH^H = \bH^H \bH$).
\end{definition}

Define the column vectors $\bx=[x_0,\ldots,x_{N-1}]^T \in \R^N$, and $\by=[y_0,\ldots,y_{N-1}]^T \in \R^N$.  The circular convolution (\ref{eq:circConv}) can be written in matrix notation
$\by = \bH \bx$.
In Section~\ref{sec:FourierConv}, we recall that circular convolution can be diagonalized in the Fourier basis.

\subsection{Fourier Eigen-decomposition of Convolution}\label{sec:FourierConv}

In this section, we recall the definition of the Fourier basis, and the eigen-decomposition of circular convolution in this basis.

\begin{definition}
The normalized Discrete Fourier Transform (DFT) matrix of size $N$ is defined as
\begin{equation}\label{eq:NormDFT}
\bF_N=\left\{\frac{1}{\sqrt{N}}\exp{ \left(- \frac{j 2 \pi \: m \: n}{N}\right)}\right\}_{m,n \in \{0,\ldots,N-1\}}.
\end{equation}
Note that $\bF_N$ is symmetric ($\bF_N = \bF_N^T$) and unitary ($\bF_N \bF_N^H=\bF_N^H \bF_N =\bI_N$).
\end{definition}
We denote by $\bof_m =[1,e^{\frac{j 2 \pi \: m }{N}},\ldots,e^{ \frac{j 2 \pi \: m \:  (N-1)}{N}}]^T \in \C^N$ the $m$-th column of the inverse DFT matrix $\bF_N^H$. Or alternatively, $\bof_m^H$ is the $m$-th row of $\bF_N$.
The normalized DFT of a vector $\bh$ is simply given by
$\hat{\bh}=\bF_N \bh$.

\begin{theorem}[\cite{Gray2006}]
Any circulant matrix $\bH$ can be diagonalized in the Fourier basis $\bF_N$: the eigenvectors of $\bH$ are given by the columns $\{\bof_m\}_{m\in \{0,\ldots,N-1\}}$ of the inverse DFT matrix $\bF_N^H$, and the associated eigenvalues $\{\lambda_m\}_{m\in \{0,\ldots,N-1\}}$ are given by $\sqrt{N} \hat{\bh}$, i.e. by the DFT of the first column $\bh$ of $\bH$: 
\begin{align*}
\forall m &\in \{0,\ldots,N-1\}, \quad \bH \bof_m= \lambda_m \bof_m\\
&\mbox{ where } \quad \lambda_m = \sqrt{N} \hat{h}_m=\sum_{n=0}^{N-1} h_n e^{ - \frac{j 2 \pi \: m \: n}{N}}. 
\end{align*}
Equivalently, in the Fourier domain, the circular convolution matrix $\bH$ becomes a diagonal matrix $\hat{\bH}= \diag\{\sqrt{N} \hat{\bh} \}.$
\end{theorem}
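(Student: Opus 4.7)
The plan is to verify the eigenvalue equation $\bH \bof_m = \lambda_m \bof_m$ directly by coordinate-wise computation, and then package the result into the diagonalization statement using unitarity of $\bF_N$.

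First I would write out the $k$-th entry of $\bH\bof_m$. By the definition of $\bH$, the $(k,n)$ entry is $h_{(k-n)\bmod N}$, and the $n$-th entry of $\bof_m$ is $\exp(j 2\pi m n / N)$. So
\[
(\bH\bof_m)_k = \sum_{n=0}^{N-1} h_{(k-n)\bmod N}\, e^{j 2\pi m n / N}.
\]
Then I would change the summation index by setting $l=(k-n)\bmod N$, so that $n \equiv k-l \pmod N$. Since $e^{j 2\pi m n/N}$ depends on $n$ only modulo $N$, the sum becomes
\[
\sum_{l=0}^{N-1} h_l\, e^{j 2\pi m (k-l) / N} = e^{j 2\pi m k / N} \sum_{l=0}^{N-1} h_l\, e^{-j 2\pi m l / N} = \lambda_m \cdot (\bof_m)_k,
\]
using the definition $\lambda_m = \sum_{l=0}^{N-1} h_l e^{-j 2\pi m l/N}$ and $(\bof_m)_k = e^{j 2\pi m k / N}$. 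This holds for every $k$, so $\bH \bof_m = \lambda_m \bof_m$.

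Having established the eigenvalue equation, I would derive the diagonalization. Because $\bF_N$ is unitary, its columns (equivalently, the $\bof_m / \sqrt{N}$) form an orthonormal basis of $\C^N$, so stacking the $N$ eigenvalue equations yields $\bH \bF_N^H = \bF_N^H \diag\{\lambda_0, \ldots, \lambda_{N-1}\}$, and hence $\bF_N \bH \bF_N^H = \diag\{\lambda_0, \ldots, \lambda_{N-1}\}$. Finally, observing that $\lambda_m = \sqrt{N}\,\hat h_m$ by the normalization in the definition of $\bF_N$ gives the stated form $\hat\bH = \diag\{\sqrt{N}\,\hat\bh\}$.

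There is no real obstacle here; the whole proof is a change-of-summation-variable exercise. The only thing to be slightly careful about is the normalization conventions: the vectors $\bof_m$ as defined are unnormalized (they have norm $\sqrt{N}$), and $\lambda_m$ equals $\sqrt{N}\,\hat h_m$ rather than $\hat h_m$, so I would double-check these $\sqrt{N}$ factors when passing between the eigenvalue equation and the matrix identity $\hat\bH=\bF_N\bH\bF_N^H$.
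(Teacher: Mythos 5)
Your proof is correct: the coordinate computation, the change of summation variable modulo $N$, and the passage to the matrix identity $\bF_N\bH\bF_N^H=\diag\{\sqrt{N}\hat\bh\}$ via unitarity are all sound, and you are right to flag the $\sqrt{N}$ normalization (the $\bof_m$ as defined are $\sqrt{N}$ times the actual columns of $\bF_N^H$, which is harmless for the eigenvalue equation). The paper itself gives no proof — it cites this classical fact from the literature (\cite{Gray2006}) — and your argument is exactly the standard textbook verification of it.
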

%
\begin{cor}
Consider the circular convolution $\by=\bH \bx$ \junk{\rm
  (\ref{eq:MatCircConv}) } of $\bx$ and $\by$. Let $\hat{\bx}=\bF_N
\bx$ and $\hat{\bh}= \bF_N \bh$ denote the normalized DFT of $\bx$ and
$\bh$. In the Fourier domain, the circular convolution becomes a
simple entry-wise multiplication of the components of $\sqrt{N}
\hat{\bh}$ with the components of $\hat{\bx}$: 
$\hat{\by}
= \bF_N \:  \by = \hat{\bH} \: \hat{\bx}$.
\end{cor}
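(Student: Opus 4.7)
The corollary is a direct consequence of the eigen-decomposition theorem just stated, so the plan is to simply unfold the diagonalization. First I would record the spectral factorization in matrix form. Since the columns of $\bF_N^H$ (up to the normalizing factor $1/\sqrt{N}$) are exactly the eigenvectors $\bof_m$ of $\bH$ with eigenvalues $\lambda_m = \sqrt{N}\hat h_m$, and since $\bF_N$ is unitary, the theorem above is equivalent to the identity $\bH = \bF_N^H \hat{\bH} \bF_N$ with $\hat{\bH} = \diag\{\sqrt{N}\hat{\bh}\}$. I would verify this by writing $\bH \bF_N^H = \bF_N^H \hat{\bH}$ column by column using $\bH \bof_m = \lambda_m \bof_m$, and then right-multiplying by $\bF_N$.

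Next I would apply $\bF_N$ to both sides of $\by = \bH\bx$ and substitute the factorization:
\begin{equation*}
\hat{\by} = \bF_N \by = \bF_N \bH \bx = \bF_N \bigl(\bF_N^H \hat{\bH} \bF_N\bigr) \bx = \hat{\bH}(\bF_N \bx) = \hat{\bH}\,\hat{\bx},
\end{equation*}
where the penultimate step uses $\bF_N \bF_N^H = \bI_N$ from the definition of the normalized DFT. Since $\hat{\bH}$ is diagonal with entries $\sqrt{N}\hat{h}_m$, this is exactly the entry-wise multiplication claim $\hat y_m = \sqrt{N}\,\hat h_m\,\hat x_m$.

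There is no real obstacle here; the only mild subtlety is keeping the normalization consistent. The eigenvectors $\bof_m$ defined in the excerpt have Euclidean norm $\sqrt{N}$, so care is needed to note that it is $\bof_m/\sqrt{N}$ that appears as a column of the unitary matrix $\bF_N^H$. The factor of $\sqrt{N}$ cancels between the eigenvector normalization and the eigenvalue, which is why the diagonal of $\hat{\bH}$ ends up being $\sqrt{N}\hat{\bh}$ rather than $\hat{\bh}$. Once this bookkeeping is made explicit, the corollary follows in one line from the theorem and the unitarity of $\bF_N$.
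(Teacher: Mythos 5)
Your proof is correct and follows exactly the route the paper intends: the corollary is an immediate consequence of the diagonalization $\bH = \bF_N^H \hat{\bH} \bF_N$ together with the unitarity of $\bF_N$, which is why the paper states it without a separate proof. Your remark on the $\sqrt{N}$ normalization of the eigenvectors versus the columns of the unitary $\bF_N^H$ is accurate bookkeeping and consistent with the paper's conventions.
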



\subsection{Privacy Model}

\subsubsection{Differential Privacy}

Two real-valued input vectors $\vec{x}, \vec{x}' \in [0, 1]^N$ are
\junk{\item \emph{$\ell_1$ neighbors}} \emph{neighbors} when
$\|\vec{x} - \vec{x}'\|_1 \leq 1$.


\begin{definition}
  A randomized algorithm $\alg$ satisfies $(\eps,
  \delta)$-differential privacy if for all neighbors $\vec{x},
  \vec{x}' \in [0, 1]^n$, and all measurable subsets $T$ of the
  support of $\alg$, we have
  \begin{equation*}
    \Pr[\alg(\vec{x}) \in T] \leq e^\eps\Pr[\alg(\vec{x}') \in T] + \delta,
  \end{equation*}
  where probabilities are taken over the randomness of $\alg$.
\end{definition}

\junk{Any algorithm which is $(\eps, \delta)$-differentially private for
$\ell_2$ neighbors is also $(\eps, \delta)$-differentially private for
$\ell_1$ neighbors.}

\subsubsection{Laplace Noise Mechanism}

\junk{In~\cite{DMNS}, Dwork et al.~proposed a generic method to construct
algorithms that approximate functions $f: [0, 1]^N \rightarrow X$,
where $(X, d)$ is a metric space. We will use an instantiation of
their method when $X$ is a subset of $\C^m$ and $d$ is the $\ell_p$
norm.}

\begin{definition}
  A function $f:[0, 1]^N \rightarrow \C$ has sensitivity $s$ if $s$
  is the smallest number such that for any two neighbors $\vec{x},
  \vec{x}' \in [0, 1]^N$,
  \begin{equation*}
    |f(\vec{x}) - f(\vec{x}')| \leq s.
  \end{equation*}
\end{definition}

\junk{\begin{theorem}[\cite{DMNS}, Theorem 2]\label{th:eps0-DiffPriv}
  Let $f:[0, 1]^N \rightarrow X$ ($X \subseteq \C^m)$ have
  $\ell_p$-sensitivity $s$ for $\ell_1$ \junk{(resp.~$\ell_2$)}
  neighbors. Suppose that on input $\vec{x}$, algorithm $\alg$ outputs
  $f(\vec{x}) + \vec{z}$, where $\vec{z}$ is sampled from $X$ with
  probability density $\alpha_{p, \eps, s}(X)\mu_{p, \eps,
    s}(\vec{z})$, where
  \begin{align}
    \mu_{p, \eps, s}(\vec{z}) &\triangleq \exp(-\eps \|\vec{z}\|_p/s)\\
    \alpha_{p, \eps, s}(X) &\triangleq \left(\int_{X}{\mu_{p, \eps,
          s}(\vec{z})d\vec{z}}\right)^{-1}
  \end{align}
  Then $\alg$ satisfies $(\eps, 0)$-differential privacy for $\ell_1$
  \junk{(resp.~$\ell_2$)} neighbors.
\end{theorem}

Sampling $\vec{z}$ from $\R^m$ with density $\alpha_{1, \eps,
  s}(\R^m)\mu_{1, \eps, s}$ is equivalent to sampling from the
$m$-dimensional Laplace distribution $\Lap(0, s/\eps)^m(x)=$.}

\begin{theorem}[\cite{DMNS}]\label{th:eps0-DiffPriv}
  Let $f:[0, 1]^N \rightarrow \C$ have sensitivity $s$. Suppose that
  on input $\vec{x}$, algorithm $\alg$ outputs $f(\vec{x}) + {z}$,
  where ${z} \sim \Lap(0, s/\eps)$. Then $\alg$ satisfies
  $(\eps, 0)$-differential privacy.
\end{theorem}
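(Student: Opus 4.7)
The plan is to carry out the classical Laplace-mechanism argument: compare the output distributions of $\alg(\vec{x})$ and $\alg(\vec{x}')$ pointwise in density and show their ratio is bounded by $e^\eps$ everywhere, which then immediately gives the integrated bound on $\Pr[\alg \in T]$ for every measurable $T$.

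First I would fix arbitrary neighbors $\vec{x}, \vec{x}' \in [0,1]^N$ and write down the densities. Under $\alg(\vec{x})$, the output is $f(\vec{x}) + z$ with $z \sim \Lap(0, s/\eps)$, so the density at a real point $y$ is $p_{\vec{x}}(y) = \frac{\eps}{2s}\exp(-\eps|y - f(\vec{x})|/s)$, and similarly $p_{\vec{x}'}(y) = \frac{\eps}{2s}\exp(-\eps|y - f(\vec{x}')|/s)$. (If one wishes to treat complex outputs, the same argument applies coordinate-wise, with the one-dimensional modulus replaced appropriately; since $\Lap$ in the paper is the real scalar Laplace, the natural reading is that $f$ takes real values.)

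Next I would bound the ratio. Dividing the two densities gives
\begin{equation*}
\frac{p_{\vec{x}}(y)}{p_{\vec{x}'}(y)} = \exp\!\Bigl(\tfrac{\eps}{s}\bigl(|y - f(\vec{x}')| - |y - f(\vec{x})|\bigr)\Bigr).
\end{equation*}
By the reverse triangle inequality, $|y - f(\vec{x}')| - |y - f(\vec{x})| \le |f(\vec{x}) - f(\vec{x}')|$, and by the sensitivity hypothesis this is at most $s$. Hence $p_{\vec{x}}(y)/p_{\vec{x}'}(y) \le e^\eps$ for every $y$.

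Finally I would integrate over any measurable $T$ to obtain $\Pr[\alg(\vec{x}) \in T] = \int_T p_{\vec{x}}(y)\,dy \le e^\eps \int_T p_{\vec{x}'}(y)\,dy = e^\eps \Pr[\alg(\vec{x}') \in T]$, which is the $(\eps,0)$-differential privacy condition with $\delta = 0$. There is essentially no obstacle here: the only subtle point is making sure the sensitivity bound $|f(\vec{x}) - f(\vec{x}')| \le s$ is the right quantity to plug into the triangle inequality, which it is by definition.
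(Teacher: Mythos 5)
Your proof is correct and is exactly the standard argument behind this result, which the paper simply cites from~\cite{DMNS} without reproducing a proof: bound the ratio of the two Laplace output densities pointwise by $e^\eps$ via the reverse triangle inequality and the sensitivity bound, then integrate over the measurable set $T$. Your parenthetical about the complex-valued codomain versus the real scalar Laplace noise is a reasonable reading of a genuine imprecision in the statement, and does not affect the argument.
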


\subsubsection{Composition Theorems}
An important feature of differential privacy is its robustness: when
an algorithm is a ``composition'' of several differentially private
algorithms, the algorithm itself also satisfies differential privacy
constraints, with the privacy parameters degrading smoothly. The
results in this subsection quantify how the privacy parameters
degrade.

The first composition theorem is an easy consequence of the definition
of differential privacy:
\begin{theorem}[\cite{DMNS}]
  \label{thm:simple-composition}
  Let $\alg_1$ satisfy $(\eps_1, \delta_1)$-differential privacy and
  $\alg_2$ satisfy $(\eps_2, \delta_2)$-differential privacy, where
  $\alg_2$ could take the output of $\alg_1$ as input. Then the
  algorithm which on input $\vec{x}$ outputs the tuple
  $(\alg_1(\vec{x}), \alg_2(\alg_1(\vec{x}), \vec{x}))$ satisfies
  $(\eps_1 + \eps_2, \delta_1 + \delta_2)$-differential privacy.
\end{theorem}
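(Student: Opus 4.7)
The plan is to condition on the output of $\alg_1$ and reduce the privacy of the composition to pointwise applications of each component's guarantee. Fix neighbors $\vec{x}, \vec{x}'$ and a measurable set $T$ in the joint output space. Let $\mu_x, \mu_{x'}$ denote the output distributions of $\alg_1$ on $\vec{x}, \vec{x}'$, and write $T_y = \{z : (y,z) \in T\}$ for the $y$-slice of $T$. Then by Fubini,
\[
\Pr[(\alg_1(\vec{x}), \alg_2(\alg_1(\vec{x}), \vec{x})) \in T] = \int \Pr[\alg_2(y,\vec{x}) \in T_y]\, d\mu_x(y),
\]
so the goal is to relate this expression to its analogue with $\vec{x}$ replaced by $\vec{x}'$.

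I would first apply $(\eps_2, \delta_2)$-DP of $\alg_2$ pointwise in $y$ to obtain $\Pr[\alg_2(y,\vec{x}) \in T_y] \leq e^{\eps_2}\Pr[\alg_2(y,\vec{x}') \in T_y] + \delta_2$; substituting into the integral bounds the joint probability by $e^{\eps_2}\int g(y)\, d\mu_x(y) + \delta_2$, where $g(y) := \Pr[\alg_2(y,\vec{x}') \in T_y]$ takes values in $[0,1]$. I would then transfer the integration from $\mu_x$ to $\mu_{x'}$ using $(\eps_1, \delta_1)$-DP of $\alg_1$: the setwise DP condition lifts to $[0,1]$-valued integrands via the layer-cake identity $\int g\, d\mu = \int_0^1 \mu(\{g \geq t\})\, dt$, which after integration over $t$ yields $\int g\, d\mu_x \leq e^{\eps_1}\int g\, d\mu_{x'} + \delta_1$. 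Chaining these two bounds produces the desired composition inequality on the right-hand side.

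The subtlety to watch is that naive chaining of the two bounds picks up a multiplicative $e^{\eps_2}$ on the $\delta_1$ term, giving $e^{\eps_2}\delta_1 + \delta_2$ in place of the claimed $\delta_1 + \delta_2$. To recover the exact stated bound, I would invoke the equivalent coupling/decomposition characterization of approximate DP: each $(\eps_i, \delta_i)$-DP mechanism's output can be expressed as a mixture $(1-\delta_i)P^{\text{good}}_i + \delta_i P^{\text{bad}}_i$ with the good parts pointwise $e^{\eps_i}$-close in density across neighbors. The joint composition splits into four product terms; the three involving at least one ``bad'' factor contribute total mass at most $\delta_1 + \delta_2$, while the ``good $\times$ good'' term is pointwise within $e^{\eps_1+\eps_2}$ of its counterpart for $\vec{x}'$ by multiplicativity of density ratios across the two independent sources of randomness, giving the final bound.
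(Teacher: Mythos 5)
The paper does not prove this statement at all: it is quoted as a known ``basic composition'' result (attributed to prior work and described as an easy consequence of the definition), so there is no in-paper argument to compare against. Your proof is essentially the standard textbook argument, and it is correct in outline. You rightly flag the main trap: the naive conditioning-plus-layer-cake chain only yields $(\eps_1+\eps_2,\, e^{\eps_2}\delta_1+\delta_2)$, and recovering the exact additive $\delta_1+\delta_2$ requires the mixture characterization of approximate DP (each pair of neighboring output distributions decomposes as $(1-\delta)(\text{good})+\delta(\text{bad})$ with the good parts $(\eps,0)$-indistinguishable). Two caveats are worth stating explicitly if you write this up. First, that decomposition lemma is itself a nontrivial ingredient (it is the Dwork--Rothblum--Vadhan / Kasiviswanathan--Smith characterization, not a direct consequence of the definition), so it should be cited or proved; once you have it, your first two paragraphs become redundant motivation, since the four-term ``good $\times$ good / bad'' accounting alone gives the theorem --- the good$\times$good density ratio is at most $e^{\eps_1+\eps_2}$, the bad terms carry mass $\delta_1+\delta_2-\delta_1\delta_2\le\delta_1+\delta_2$, and the $\vec{x}'$ joint distribution dominates $(1-\delta_1)(1-\delta_2)$ times its good$\times$good component. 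Second, in the adaptive setting the decomposition of $\alg_2(y,\cdot)$ depends on the first-stage output $y$, so a fully rigorous version needs this family of decompositions to be measurable in $y$ (immediate for discrete outputs, a standard measurable-selection technicality in general). With those points acknowledged, the argument is sound.
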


In a more recent paper, Dwork et al.~proved a more sophisticated
composition theorem, which often gives asymptotically better bounds on
the privacy parameters. Next we state their theorem.
\begin{theorem}[\cite{dwork2010boosting}]\label{thm:fancy-comp}
  Let $\alg_1$, $\ldots$, $\alg_k$ be such that algorithm $\alg_i$
  satisfies $(\eps_i, 0)$-differential privacy. Then the algorithm
  that on input $\vec{x}$ outputs the tuple $(\alg_1(\vec{x})$, $\ldots$,
  $\alg_k(\vec{x}))$ satisfies $(\eps, \delta)$-differential privacy
  for any $\delta > 0$ and $$\eps \geq \sqrt{2\ln
    \left(\frac{1}{\delta}\right)\sum_{i = 1}^m{\eps_i^2}}.$$
\end{theorem}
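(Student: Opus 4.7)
The plan is to prove this advanced composition theorem via the standard \emph{privacy loss} machinery: define a random variable that measures how distinguishable the output is between neighbors $\bx, \bx'$, show it has bounded range and bounded expectation per component, and then apply a concentration inequality to the sum of such variables across the $k$ algorithms.

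First I would fix two neighbors $\bx, \bx' \in [0,1]^N$ and, for each $i$, let $P_i$ and $Q_i$ denote the distributions of $\alg_i(\bx)$ and $\alg_i(\bx')$ respectively. Define the privacy loss of a transcript $y_i$ as $L_i(y_i) = \ln(P_i(y_i)/Q_i(y_i))$, so that the total privacy loss for the composed algorithm on an output tuple $(y_1,\ldots,y_k)$ is $L = \sum_{i=1}^{k} L_i(y_i)$. The key observation is that once one shows $\Pr_{y \sim P}[L(y) > \eps] \leq \delta$, a short calculation (integrating out the bad event) yields $\Pr[\alg(\bx) \in T] \leq e^\eps \Pr[\alg(\bx') \in T] + \delta$ for every measurable $T$, which is the desired $(\eps,\delta)$-DP guarantee.

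Next I would establish two bounds on each $L_i$ viewed as a random variable over $y_i \sim P_i$. Since $\alg_i$ is $(\eps_i,0)$-DP, we immediately get $|L_i(y_i)| \leq \eps_i$ pointwise. The more delicate claim is that $\E_{y_i \sim P_i}[L_i(y_i)] \leq \eps_i(e^{\eps_i}-1) \leq 2\eps_i^2$ (for the regime $\eps_i \leq 1$); this is the KL divergence $D(P_i \Vert Q_i)$ and is a standard consequence of pure differential privacy, obtained by writing $P_i(y)/Q_i(y) \in [e^{-\eps_i}, e^{\eps_i}]$ and using $\E_{P_i}[P_i/Q_i - 1] = 0$ to compare the log and linear terms. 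With these two facts in hand, $L = \sum L_i$ is a sum of independent bounded random variables whose expectations are controlled by $\sum \eps_i^2$, so Azuma--Hoeffding applied to the centered sum $L - \E[L]$ gives $\Pr[L - \E[L] > t] \leq \exp(-t^2 / (2 \sum \eps_i^2))$.

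Setting $t = \sqrt{2 \ln(1/\delta) \sum_i \eps_i^2}$ makes the tail probability at most $\delta$, and combining with the bound $\E[L] \leq \sum \eps_i(e^{\eps_i}-1)$ yields $\Pr[L > \eps] \leq \delta$ for the $\eps$ in the statement (absorbing the expectation term into a slightly larger $\eps$; in the small-$\eps_i$ regime this is second-order and fits the stated bound up to constants). Converting back via the integration argument of the first step closes the proof. The main obstacle is getting the constants right: the expectation term $\E[L]$ contributes an additive $O(\sum \eps_i^2)$ which must either be absorbed into the $\sqrt{2\ln(1/\delta)\sum \eps_i^2}$ bound or handled by a tighter moment-generating-function argument (as in the original~\cite{dwork2010boosting} proof) rather than plain Azuma; the cleanest route is to bound the MGF $\E[e^{\lambda(L_i - \E L_i)}]$ directly using $|L_i| \leq \eps_i$ and $\E L_i \leq 2\eps_i^2$, then optimize $\lambda$ in a Chernoff bound on $L$.
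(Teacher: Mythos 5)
The paper does not actually prove this statement: it is quoted directly from~\cite{dwork2010boosting}, so there is no internal proof to compare against. Your sketch is essentially a reconstruction of the argument in that source: privacy-loss variables $L_i(y)=\ln\bigl(P_i(y)/Q_i(y)\bigr)$, the pointwise bound $|L_i|\le\eps_i$ from pure privacy, the divergence bound $\E_{P_i}[L_i]\le\eps_i(e^{\eps_i}-1)$, concentration of $\sum_i L_i$, and the standard conversion of a tail bound on the loss into $(\eps,\delta)$-privacy. Two small notes: since the composition here is non-adaptive (each $\alg_i$ sees only $\vec{x}$), the $L_i$ are independent and plain Hoeffding suffices --- no Azuma/martingale machinery is needed; and in your one-line justification of the KL bound the centering identity should be $\E_{P_i}[Q_i(y_i)/P_i(y_i)]=1$ (equivalently $\E_{Q_i}[P_i/Q_i]=1$), not $\E_{P_i}[P_i/Q_i-1]=0$, though the bound $D(P_i\Vert Q_i)\le\eps_i(e^{\eps_i}-1)$ you invoke is correct and standard.

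The substantive caveat is the one you flag yourself. The concentration step gives $\Pr\bigl[\sum_i L_i > \sum_i\eps_i(e^{\eps_i}-1)+t\bigr]\le\exp\bigl(-t^2/(2\sum_i\eps_i^2)\bigr)$, so with $t=\sqrt{2\ln(1/\delta)\sum_i\eps_i^2}$ your argument establishes $(\eps,\delta)$-privacy for $\eps=\sqrt{2\ln(1/\delta)\sum_i\eps_i^2}+\sum_i\eps_i(e^{\eps_i}-1)$, which is the form actually proved in~\cite{dwork2010boosting}. The display in the statement drops the additive expectation term; that cleaner inequality does not follow from Hoeffding alone and only holds after absorbing the $O(\sum_i\eps_i^2)$ term, e.g.\ in the small-$\eps_i$ regime in which the paper applies the theorem (``sufficiently small $\eps$ and $\delta$''). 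So either state the conclusion with the additive term or make the absorption explicit rather than ``up to constants''; this discrepancy is inherited from the simplified statement, not from your proof strategy, which is the right one.
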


\subsection{Accuracy}

In this paper we are interested in differentially private algorithms
for the \emph{convolution problem}. In the convolution problem, we are
given a \emph{public} sequence $h = \{h_1, \ldots, h_N\}$ and a
\emph{private} sequence $x = \{x_1, \ldots, x_N\}$. Our goal is to
design an algorithm $\alg$ that is $(\eps, \delta)$-differentially
private with respect to the private input $x$ (taken as column vector
$\vec{x}$), and approximates the convolution $h \ast x$. More
precisely,

\begin{definition}
  Given a vector $\vec{h} \in \R^N$ which defines a convolution matrix
  $\vec{H}$, the mean (expected) squared error ($\MSE$) of an algorithm $\alg$ is
  defined as
  \begin{equation*}
    \MSE = \sup_{\vec{x} \in \R^N}\frac{1}{N}\E[\|\alg(\vec{x})
    - \vec{Hx}\|_2^2].
  \end{equation*}
\end{definition}

Note that $\MSE$ measures the mean expected squared error
\emph{per output component}.

\junk{
\section{Basic Mechanisms}

\subsection{Exponential Noise Mechanism}

In~\cite{DMNS}, Dwork et al.~proposed a generic method to construct
algorithms that approximate functions $f: [0, 1]^N \rightarrow X$,
where $(X, d)$ is a metric space. We will use an instantiation of
their method when $X$ is a subset of $\C^m$ and $d$ is the $\ell_p$
norm.

We first need to introduce some notation. We denote the real $m$-sphere of radius $r$  as $r\R S^m = \{\vec{x} \in \R^{m+1}: \|\vec{x}\|_2 = r\}$ and the complex $m$-sphere of radius $r$ as $r\C S^m = \{\vec{x} \in \C^{m+1}: \|\vec{x}\|_2 = r\}$.
We have the following facts about the $m$-dimensional volume of $r \R S^m$ and $r \C
S^m$:
\begin{align}
  \int_{r\R S^m}{1 d\vec{x}} &= \frac{2\pi^{\frac{m + 1}{2}}}{\Gamma(\frac{m+1}{2})} \:\: r^m\\
  \int_{r\C S^m}{1 d\vec{x}} &=  \frac{2\pi^{m + 1}}{\Gamma(m
    + 1)} \:\: r^{2m+1} \\
\end{align}
Note that $r\C S^m$ can be thought of  as $r\R S^{2m+1}$.

We will also need some common distributions. The normal distribution
with mean $a$ and variance $b$ is denoted as $N(a, b)$. The
chi-squared distribution with $m$ degrees of freedom is denoted as
$\chi^2_m$. It is distributed as the sum of $m$ squared samples from
$N(0, 1)$ and therefore has mean $m$. The gamma distribution with
scale $b > 0$ and shape $a > 0$ is denoted $\Gamma(a, b)$. Its
probability density function (defined for $x \geq 0$) is
\begin{equation*}
  \gamma_{a, b}(x) = \frac{1}{\Gamma(a)b^a}x^{a-1}\exp(-\frac{x}{b}).
\end{equation*}
The mean of the $\Gamma(a, b)$ is $ab$ and its variance is $ab^2$.

\begin{definition}
  A function $f:[0, 1]^N \rightarrow \C^m$ has $\ell_p$-sensitivity
  $s$ for $\ell_1$ \junk{(resp.~$\ell_2$)} neighbors, if $s$ is the smallest
  number such that for any two $\ell_1$ \junk{(resp.~$\ell_2$)} neighbors
  $\vec{x}, \vec{x}' \in [0, 1]^N$,
  \begin{equation*}
    \|f(\vec{x}) - f(\vec{x}')\|_p \leq s.
  \end{equation*}
\end{definition}

\begin{theorem}[\cite{DMNS}, Theorem 2]\label{th:eps0-DiffPriv}
  Let $f:[0, 1]^N \rightarrow X$ ($X \subseteq \C^m)$ have
  $\ell_p$-sensitivity $s$ for $\ell_1$ \junk{(resp.~$\ell_2$)}
  neighbors. Suppose that on input $\vec{x}$, algorithm $\alg$ outputs
  $f(\vec{x}) + \vec{z}$, where $\vec{z}$ is sampled from $X$ with
  probability density $\alpha_{p, \eps, s}(X)\mu_{p, \eps,
    s}(\vec{z})$, where
  \begin{align}
    \mu_{p, \eps, s}(\vec{z}) &\triangleq \exp(-\eps \|\vec{z}\|_p/s)\\
    \alpha_{p, \eps, s}(X) &\triangleq \left(\int_{X}{\mu_{p, \eps,
          s}(\vec{z})d\vec{z}}\right)^{-1}
  \end{align}
  Then $\alg$ satisfies $(\eps, 0)$-differential privacy for $\ell_1$
  \junk{(resp.~$\ell_2$)} neighbors.
\end{theorem}

Sampling $\vec{z}$ from $\R^m$ with density $\alpha_{1, \eps,
  s}(\R^m)\mu_{1, \eps, s}$ is equivalent to sampling from the
$m$-dimensional Laplace distribution $\Lap(0, s/\eps)^m$. Next we show
$\mu_{2, \eps, s}$ is closely related to the gamma distribution.

\begin{lemma}\label{lm:l2-sample}
  Let $\vec{z}$ be sampled as follows:
  \begin{itemize}
  \item pick $r$ from the distribution $\Gamma(2m, s/\eps)$
    (resp. $\Gamma(m, s/\eps)$)
  \item pick $\vec{z}$ uniformly from the complex sphere $r\C S^{m-1}$
    (resp. from the real sphere $ r \R S^{m-1}$)
  \end{itemize}
  Then the distribution of $\vec{z}$ has probability density
  $\alpha_{2, \eps, s}(\C^m)\mu_{2,
    \eps, s}$ (resp. $\alpha_{2, \eps, s}(\R^m)\mu_{2, \eps, s}$).
\end{lemma}
\begin{proof}
  We will prove the complex case, the real case is analogous.

  Let $p$ be the probability density of $\vec{z}$ sampled as in the
  statement of the lemma. It is enough to show that
  \begin{equation}\label{eq:ratio-dens}
    \forall \vec{z}, \vec{z}' \in \C^m: \frac{p(\vec{z})}{p(\vec{z}')}
    = \frac{\mu_{2, \eps,  s}(\vec{z})}{\mu_{2, \eps, s}(\vec{z}')}.
  \end{equation}
  Let $\gamma = \gamma_{2m, s/\eps}$. We have
  \begin{equation}\label{eq:sampl-dens}
    p(\vec{z}) = \gamma(\|\vec{z}\|_2)\left(\int_{\|\vec{z}\|_2
        \C S^{n-1}}{1}\right)^{-1} =  c(m, \eps, s)
    \frac{\|\vec{z}\|_2^{2m-1}\exp(-\eps\|\vec{z}\|_2/s)}{\|\vec{z}\|_2^{2m-1}}
    = c(m, \eps, s)\exp(-\eps\|\vec{z}\|_2/s),
  \end{equation}
  where $c(m, \eps, s)=\frac{\Gamma(m)}{2 \pi^m \Gamma(2m)} \left(\frac{\eps}{s}\right)^2$ is a constant that depends only on $m$, $\eps$ and $s$   (but not on $\vec{z}$). Equation (\ref{eq:ratio-dens}) follows
  immediately from (\ref{eq:sampl-dens}).
\end{proof}

\begin{cor}
  Let $\vec{z}$ be sampled from $\C^m$ with probability density
  $\alpha_{2, \eps, s}(\C^m)\mu_{2, \eps, s}(\vec{z})$. Then
  $\E[\|\vec{z}\|_2] = \frac{2ms}{\eps}$.
\end{cor}

\junk{
Lemma~\ref{lm:l2-sample} allows us to derive the covariance matrix of
$\vec{z}$ when $\vec{z}$ is sampled from $\C^m$ with density
$\alpha_{2, \eps, s}(\C^m)\mu_{2, \eps, s}(\vec{z})$. }

\begin{lemma}
  Let $\vec{w}$ be sampled uniformly at random from the real sphere
  $\R S^{m-1}$. Then
  \begin{equation*}
    \E[\vec{w}\vec{w}^T] = \frac{1}{m}\bI_m.
  \end{equation*}
\end{lemma}
\begin{proof}
  Let $\vec{y} \in \C^{m}$ be sampled from $N(0, 1)^m$. The distribution of
  $\vec{y}$ is rotation invariant and $\|\vec{y}\|_2^2$ is distributed
  as a sample from $\chi^2_m$. Therefore, $\vec{y}$ can be
  equivalently sampled by sampling $r^2$ from $\chi^2_m$, independently
  sampling $\vec{w}$ uniformly from $S^{m-1}$, and letting $\vec{y} =
  r\vec{w}$. It follows that for any $i$, $j$, $\E[y_iy_j] =
  \E[r^2 w_iw_j] = \E[\|\vec{y}\|_2^2w_iw_j]$. Therefore,
  \begin{equation}
    \E[\vec{w}\vec{w}^T] =
    \frac{1}{\E[\|\vec{y}\|_2^2]}\E[\vec{y}\vec{y}^T] = \frac{1}{m}\bI_m.
  \end{equation}
\end{proof}

\begin{lemma}\label{lm:l2-covar}
  Let $\vec{z}$ be sampled from $\R^m$ with density
  $\alpha_{2, \eps, s}(\R^m)\mu_{2, \eps, s}(\vec{z})$. Then
  \begin{equation}
    \E[\vec{z}\vec{z}^T] = \Theta\left(\frac{ms^2}{\eps^2}\right)\bI_m.
  \end{equation}
\end{lemma}
\begin{proof}
  Let $r$ be sampled from $\Gamma(m, \frac{s}{\eps})$  and $\vec{w}$ sampled
  uniformly from $\R S^m$. By Lemma~\ref{lm:l2-sample}, $\vec{z}$ is
  distributed identically to $r\vec{w}$. Therefore,
  \begin{equation}
    \E[\vec{z}\vec{z}^T] = \E[r^2]\E[\vec{w}\vec{w}^T] =
    m^2\frac{s^2}{\eps^2}\left(1 + \frac{1}{m}\right)\frac{1}{m}\bI_m =
    \Theta\left(\frac{ms^2}{\eps^2}\right)\bI_m.
  \end{equation}
\end{proof}

\junk{
\subsection{Gaussian Noise Mechanisms}

\begin{theorem}[OurDataOurSelves2006]\label{th:eps-delta-DiffPriv}

\end{theorem}
}

\subsection{Composition}

An important feature of differential privacy is its robustness: when
an algorithm is a ``composition'' of several differentially private
algorithms, the algorithm itself also satisfies differential privacy
constraints, with the privacy parameters degrading smoothly. The
results in this subsection quantify how the privacy parameters
degrade. In the following theorems, if the component algorithms
satisfy -differential privacy for $\ell_1$ (resp.~$\ell_2$) neighbors,
the composition satisfies differential privacy for $\ell_1$
(resp.~$\ell_2$ neighbors).

The following composition theorem is an easy consequence of the
definition of differential privacy:
\begin{theorem}[\cite{DMNS}]
  \label{thm:simple-composition}
  Let $\alg_1$ satisfy $(\eps_1, \delta_1)$-differential privacy and $\alg_2$ satisfy $(\eps_2,
  \delta_2)$-differential privacy, where $\alg_2$ could take the
    output of $\alg_1$ as input. Then the algorithm which on input
  $\vec{x}$ outputs the tuple $(\alg_1(\vec{x}),
  \alg_2(\alg_1(\vec{x}), \vec{x}))$ satisfies $(\eps_1 + \eps_2,
  \delta_1 + \delta_2)$-differential privacy.
\end{theorem}

In a more recent paper, Dwork et al.~proved a more sophisticated
composition theorem, which often gives asymptotically better bounds on
the privacy parameters. Next we state their theorem.
\begin{theorem}[\cite{dwork2010boosting}]\label{thm:fancy-comp}
  Let $\alg_1$, $\ldots$, $\alg_k$ be such that algorithm $\alg_i$
  satisfies $(\eps_i, 0)$-differential privacy. Then, for any $\delta
  > 0$, the algorithm that on input $\vec{x}$ outputs the tuple
  $(\alg_1(\vec{x}), \ldots, \alg_k(\vec{x}))$ satisfies $(\eps,
  \delta)$-differential privacy for any $\eps \geq \sqrt{2\ln
    \left(\frac{1}{\delta}\right)\sum_{i = 1}^m{\eps_i^2}}$.
\end{theorem}
} 

\section{Lower Bounds}\label{sec:LB}

In this section we derive a spectral lower bound on mean squared error of
differentially private approximation algorithms for circular
convolution. We prove that this bound is nearly tight for every
fixed $\vec{h}$ in the following section. The lower bound is state as
Theorem~\ref{thm:main-lb}. 

\begin{theorem}
  \label{thm:main-lb}
  Let $\vec{h} \in \R^N$ be an arbitrary real vector and let us
  relabel the Fourier coefficients of $\vec{h}$ so that $|\hat{h}_0|
  \geq \ldots \geq |\hat{h}_{N-1}|$. For all sufficiently
  small $\eps$ and $\delta$, the expected mean squared error
  $\MSE$ of any $(\eps, \delta)$-differentially private algorithm
  $\alg$ that approximates $\vec{h} \ast \vec{x}$ is at least
  \begin{equation}
    \label{eq:spec-lb}
    \MSE = \Omega\left(\max_{K =
        1}^{N}{\frac{K^2\hat{h}^2_{K-1}}{N\log^2N}}\right).
  \end{equation}
\end{theorem}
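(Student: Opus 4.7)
The plan is to combine three ingredients: a post-processing argument that replaces $\vec{H}$ by $\vec{P}_K\vec{H}$ for a cleverly chosen orthogonal projection; the discrepancy-based noise lower bound of Muthukrishnan and Nikolov~\cite{stoc}; and the Lov\'asz--Spencer--Vesztergombi determinant lower bound on hereditary discrepancy. Throughout, a parameter $K\in\{1,\dots,N\}$ is fixed; at the end I take the maximum over $K$.

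First I would let $\vec{P}_K$ be the orthogonal projection onto the real span of the $K$ Fourier eigenvectors $\bof_m$ corresponding to the $K$ largest values of $|\hat{h}_m|$ (pairing complex-conjugate indices so that $\vec{P}_K$ is a real matrix). Since $\vec{P}_K$ is a contraction in $\ell_2$ and post-processing preserves privacy, given any $(\eps,\delta)$-differentially private algorithm $\alg$ for $\vec{H}\vec{x}$, the algorithm $\vec{P}_K\alg$ is $(\eps,\delta)$-DP and satisfies
$$\|\vec{P}_K\alg(\vec{x})-\vec{P}_K\vec{H}\vec{x}\|_2\leq\|\alg(\vec{x})-\vec{H}\vec{x}\|_2.$$
Thus $\MSE(\alg)$ is at least the best possible MSE of an $(\eps,\delta)$-DP algorithm for the workload $\vec{P}_K\vec{H}$, and by the noise lower bound of~\cite{stoc} this is in turn $\Omega\bigl(\herdisc(\vec{P}_K\vec{H})^2/\log^2 N\bigr)$. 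The problem is thereby reduced to lower bounding $\herdisc(\vec{P}_K\vec{H})$.

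By the Fourier diagonalization of Section~\ref{sec:FourierConv}, $\vec{P}_K\vec{H}$ has rank $K$ with nonzero singular values $\sigma_i=\sqrt{N}\,|\hat{h}_i|$ for $i=0,\dots,K-1$. Writing its SVD as $\vec{U}\vec{\Sigma}\vec{V}^*$ with $\vec{U},\vec{V}\in\C^{N\times K}$ having orthonormal columns, for any $K$-subsets $R,S\subseteq[N]$,
$$\bigl|\det\bigl((\vec{P}_K\vec{H})|_{R,S}\bigr)\bigr|=\bigl|\det(\vec{U}|_R)\bigr|\cdot\Bigl(\prod_{i=0}^{K-1}\sigma_i\Bigr)\cdot\bigl|\det(\vec{V}|_S)\bigr|.$$
Cauchy--Binet applied to $\vec{U}^*\vec{U}=\vec{I}_K$ gives $\sum_R|\det(\vec{U}|_R)|^2=1$, so some $R$ attains $|\det(\vec{U}|_R)|\geq\binom{N}{K}^{-1/2}$; likewise for $\vec{V}$. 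Hence some $K\times K$ submatrix $\vec{B}=(\vec{P}_K\vec{H})|_{R,S}$ satisfies $|\det(\vec{B})|\geq\prod_i\sigma_i/\binom{N}{K}$, and the LSV bound yields $\herdisc(\vec{P}_K\vec{H})\geq\tfrac12|\det(\vec{B})|^{1/K}$.

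Finally, using $\binom{N}{K}^{1/K}\leq eN/K$ together with $\bigl(\prod_{i=0}^{K-1}|\hat{h}_i|\bigr)^{1/K}\geq|\hat{h}_{K-1}|$ (from the sorting), I obtain $\herdisc(\vec{P}_K\vec{H})=\Omega(K|\hat{h}_{K-1}|/\sqrt{N})$; squaring and feeding into the MN lower bound, then taking the maximum over $K$, yields the claimed bound. The main obstacle I anticipate is the real/complex bookkeeping: since $\vec{h}$ is real, the Fourier eigenvectors arise in complex-conjugate pairs, and to keep $\vec{P}_K\vec{H}$ real (so that the real-valued discrepancy machinery applies) one must select conjugate pairs, which costs only constant factors but requires care in the indexing. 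A second technical point is invoking the lower bound of~\cite{stoc} in the precise form accounting for the $M=N$ normalization in the definition of $\MSE$ and for the polylogarithmic dependence on $\eps$ and $\delta$.
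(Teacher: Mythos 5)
Your architecture is the same as the paper's (post-process with a spectral projection, apply the discrepancy-to-privacy bound of~\cite{stoc}, then the Lov\'asz--Spencer--Vesztergombi determinant bound via Cauchy--Binet), but the execution of the determinant step has a genuine quantitative gap. Because you keep the projected workload as the $N\times N$ matrix $\vec{P}_K\vec{H}=\vec{U}\vec{\Sigma}\vec{V}^*$, a $K\times K$ submatrix forces you to select both rows and columns, and you pay Cauchy--Binet twice: $|\det(\vec{B})|\geq \binom{N}{K}^{-1}\prod_i\sigma_i$, i.e.\ $|\det(\vec{B})|^{1/K}\gtrsim \frac{K}{N}\sqrt{N}\,|\hat{h}_{K-1}|$. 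The paper instead works with a $K\times N$ representation of the projection (effectively $\vec{U}^*\vec{H}=\vec{\Sigma}\vec{V}^*$; note that $\herdisc$ is unchanged under left multiplication by a matrix with orthonormal columns, and the post-processing argument only needs $\|\vec{P}(\vec{y}-\vec{H}\vec{x})\|_2\leq\|\vec{y}-\vec{H}\vec{x}\|_2$, which holds for $\vec{P}=\vec{U}^*$). There only columns are chosen, a single Cauchy--Binet application loses only $\binom{N}{K}^{-1/(2K)}\approx\sqrt{K/(eN)}$, and one gets $|\det(\vec{B})|^{1/K}=\Omega(\sqrt{K}\,|\hat{h}_{K-1}|)$, hence (with the $\sqrt{K}$ factor in LSV) $\herdisc=\Omega(K|\hat{h}_{K-1}|)$ and the stated $\MSE$ bound after dividing the total squared error by $N$.

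Carried through with correct bookkeeping, your version falls short: you dropped the $\sqrt{K}$ in the LSV bound and, in the last step, the $1/N$ normalization of $\MSE$ (which you flag but treat as routine); these two omissions are what make your arithmetic appear to land on the theorem. Restoring them, your chain gives $\herdisc(\vec{P}_K\vec{H})=\Omega(K^{3/2}|\hat{h}_{K-1}|/\sqrt{N})$ and thus $\MSE=\Omega\bigl(K^3\hat{h}_{K-1}^2/(N^2\log^2 N)\bigr)$, weaker than the claimed $\Omega\bigl(K^2\hat{h}_{K-1}^2/(N\log^2 N)\bigr)$ by a factor $N/K$ (they agree only at $K=N$). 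The fix is precisely the $K\times N$ formulation above, which is the content of the paper's Lemma~\ref{lm:subsp-det}. A minor point: the real/complex pairing you worry about is unnecessary, since both Theorem~\ref{thm:disc-lb} and the determinant bound are stated for complex matrices.
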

For the remainder of the paper, we define the notation
$\specLB(\vec{h})$ for the right hand side of (\ref{eq:spec-lb}), i.e.
  $\specLB(\vec{h}) = \max_{K = 1}^N{\frac{K^2\hat{h}^2_{K-1}}{N\log^2N}}$.

The proof of Theorem~\ref{thm:main-lb} is based on recent
work~\cite{stoc} connecting combinatorial discrepancy and
privacy. Adapting a strategy due to Hardt and
Talwar~\cite{hardt2010geometry}, we instantiate the basic discrepancy
lower bound for any matrix $\vec{PA}$, where $\vec{P}$ is a projection
matrix, and use the maximum of these lower bounds. However, we need to
resolve several issues that arise in the setting of $(\eps,
\delta)$-differential privacy. While projection works naturally with
the volume-based lower bounds of Hardt and Talwar, the connection
between the discrepancy of $\vec{A}$ and $\vec{PA}$ is not immediate,
since discrepancy is a combinatorially defined quantity. Our main
technical contribution in this section is analyzing the discrepancy of
$\vec{PA}$ via the determinant lower bound of Lov\'{a}sz, Spencer,
Vesztergombi. This approach was generalized and extended by Nikolov,
Talwar, and Zhang~\cite{geometry} to show nearly optimal lower bounds
for arbitrary linear maps.

We start our presentation with preliminaries from prior work and then
we develop our lower bounds for convolutions. 

\subsection{Discrepancy Preliminaries}

We define ($\ell_2$) hereditary discrepancy as
\begin{equation*}
  \herdisc(\vec{A}) = \max_{W \subseteq [N]}{\min_{\vec{v} \in \{-1,
      +1\}^W}{\|\vec{Av}\|_2}}.
\end{equation*}

The following result connects discrepancy and differential privacy:
\begin{theorem}[\cite{stoc}]
  \label{thm:disc-lb}
  Let $\vec{A}$ be an $M \times N$ complex matrix and let $\alg$ be an
  $(\eps, \delta)$-differentially private algorithm for sufficiently
  small constant $\eps$ and $\delta$. There exists a constant $C$ and a
  vector $\vec{x} \in \{0, 1\}^N$ such that
    $\E[\|\alg(\vec{x}) - \vec{Ax}\|_2^2] \geq
    C\frac{\herdisc(\vec{A})^2}{\log^2 N}$.
\end{theorem}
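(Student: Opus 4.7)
\emph{Proposal.} The plan is to combine Theorem~\ref{thm:disc-lb} with an orthogonal projection trick. If $\alg$ is $(\eps,\delta)$-differentially private and approximates $\vec{Hx}$, then for any orthogonal projection $\vec{P}$ the post-processed algorithm $\vec{P}\alg$ is still $(\eps,\delta)$-DP and satisfies $\|\vec{P}(\alg(\vec{x})-\vec{Hx})\|_2 \le \|\alg(\vec{x})-\vec{Hx}\|_2$. I would apply this with $\vec{P}_K$ equal to the projection onto $\mathrm{span}\{\bof_0,\ldots,\bof_{K-1}\}$, the eigenspace of $\vec{H}$ associated to the $K$ largest Fourier magnitudes (after the relabeling in the theorem statement). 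Using the diagonalization $\vec{H}=\vec{F}_N^H\hat{\vec{H}}\vec{F}_N$ and unitary invariance of the $\ell_2$ norm, this reduces the problem to lower bounding the error in approximating $\vec{Mx}$, where
\[
\vec{M} \;=\; \diag\!\bigl(\sqrt{N}\hat h_0,\ldots,\sqrt{N}\hat h_{K-1}\bigr)\,\vec{F}_N^{[K]}
\]
is the $K\times N$ matrix obtained by keeping the top $K$ Fourier rows. Theorem~\ref{thm:disc-lb} applied to $\vec{M}$ then yields $\sup_{\vec{x}}\E[\|\alg(\vec{x})-\vec{Hx}\|_2^2] \;\ge\; \Omega\!\bigl(\herdisc(\vec{M})^2/\log^2 N\bigr)$.

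The bulk of the work is lower bounding $\herdisc(\vec{M})$ by exhibiting a well-conditioned square submatrix. Assume first that $K\mid N$ and take $S=\{0,\,N/K,\,2N/K,\,\ldots,\,(K-1)N/K\}$. Then the sub-Fourier factor $\vec{F}_N^{[K]}|_S$ has entries $e^{-j2\pi mk/K}/\sqrt{N}$, i.e.\ it is exactly $1/\sqrt{N}$ times the un-normalized $K$-point DFT. Hence its columns are orthogonal with common norm $\sqrt{K/N}$, so all its singular values equal $\sqrt{K/N}$, and all singular values of $\vec{M}|_S=\diag(\sqrt{N}\hat h_m)\,\vec{F}_N^{[K]}|_S$ equal $\sqrt{K}\,|\hat h_m|$, $m=0,\ldots,K-1$. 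In particular $\sigma_{\min}(\vec{M}|_S)=\sqrt{K}\,|\hat h_{K-1}|$, and using $\|v\|_2=\sqrt{K}$ for $v\in\{-1,+1\}^K$,
\[
\herdisc(\vec{M})\;\ge\;\min_{v\in\{-1,+1\}^K}\|\vec{M}|_S v\|_2 \;\ge\; \sqrt{K}\,\sigma_{\min}(\vec{M}|_S) \;=\; K\,|\hat h_{K-1}|.
\]
Equivalently, this is the LSV determinant bound: $|\det\vec{M}|_S|=K^{K/2}\prod_{m<K}|\hat h_m|$, so $|\det\vec{M}|_S|^{1/K}\ge\sqrt{K}\,|\hat h_{K-1}|$ by monotonicity, and the $\ell_2$ version of LSV supplies an additional $\sqrt{K}$. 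Plugging back into Theorem~\ref{thm:disc-lb} and dividing by $N$ yields $\MSE\ge\Omega(K^2\hat h_{K-1}^2/(N\log^2N))$; maximizing over $K$ gives the claimed bound.

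The main obstacle is establishing the reduction to $\herdisc(\vec{M})$ at all: unlike the volume-based lower bounds of Hardt--Talwar, discrepancy is a combinatorial quantity, so there is no a priori reason that a bound on $\herdisc(\vec{PA})$ should control the error in approximating $\vec{Ax}$. The fix is the projection/post-processing argument above, which relies on $\vec{P}$ being non-expansive in $\ell_2$ and on post-processing preservation of $(\eps,\delta)$-differential privacy. A secondary technical point is the case $K\nmid N$: one must verify that for approximately equispaced $S=\{\lfloor iN/K\rfloor\}$ the sub-DFT $\vec{F}_N^{[K]}|_S$ still has all singular values $\Theta(\sqrt{K/N})$, so that $\sigma_{\min}(\vec{M}|_S)=\Theta(\sqrt{K}\,|\hat h_{K-1}|)$ up to a universal constant; this can alternatively be finessed by invoking the LSV determinant bound directly, at the cost of mild polylogarithmic slack absorbed into the $\log^2 N$ factor.
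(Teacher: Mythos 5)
Your proposal does not prove the statement at hand. The statement is Theorem~\ref{thm:disc-lb} itself --- the assertion that any $(\eps,\delta)$-differentially private algorithm for $\vec{Ax}$ must incur squared error $\Omega(\herdisc(\vec{A})^2/\log^2 N)$ on some $\vec{x}\in\{0,1\}^N$. Your very first step is ``apply Theorem~\ref{thm:disc-lb} to $\vec{M}$,'' so the argument is circular with respect to its target: you assume the conclusion as a black box and then build on top of it. What you have actually sketched is a proof of Theorem~\ref{thm:main-lb} (the spectral lower bound for convolutions), which is a different statement that the paper derives \emph{from} Theorem~\ref{thm:disc-lb} via Corollary~\ref{cor:det-lb}, Lemma~\ref{lm:subsp-det-lb}, and Lemma~\ref{lm:subsp-det}.

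The missing content is the actual bridge from differential privacy to combinatorial discrepancy. The paper imports this from~\cite{stoc} without proof; establishing it requires a reconstruction-style argument of roughly the following shape: if a private algorithm answered $\vec{Ax}$ with squared error much below $\herdisc(\vec{A}|_W)^2$ on every restriction $W\subseteq[N]$, an adversary could round the answers to recover a $\{0,1\}$ vector close to the true restricted input, and iterating/averaging this over neighboring databases contradicts $(\eps,\delta)$-privacy; the $\log^2 N$ loss arises in passing between discrepancy, hereditary discrepancy, and the rounding step. None of this appears in your proposal. As a secondary remark, your treatment of the part you did address (lower bounding the discrepancy of the projected matrix) differs from the paper's: you exhibit an explicit equispaced column subset and bound $\sigma_{\min}$ directly, whereas the paper averages over all $K\times K$ submatrices via the Binet--Cauchy formula (Lemma~\ref{lm:subsp-det}), which avoids the $K\nmid N$ case analysis you flag. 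That observation would be a legitimate alternative route for Theorem~\ref{thm:main-lb}, but it is not a proof of Theorem~\ref{thm:disc-lb}.
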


The determinant lower bound for hereditary discrepancy due to
Lov\'{a}sz, Spencer, and Vesztergombi gives us a spectral lower bound
on the noise required for privacy.

\begin{theorem}[\cite{lovasz1986discrepancy}]
  There exists a constant $C'$ such that for any complex $M \times N$
  matrix $\vec{A}$,
    $\herdisc(\vec{A}) \geq C'\max_{K, \vec{B}}{\sqrt{K}|\det(\vec{B})|^{1/K}}$,
  where $K$ ranges over $[\min\{M, N\}]$ and $\vec{B}$ ranges
  over $K \times K$ submatrices of $\vec{A}$.
\end{theorem}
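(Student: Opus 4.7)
The plan is to reduce the hereditary discrepancy lower bound to a spectral bound on a single well-conditioned submatrix, using monotonicity of $\herdisc$ together with an optimality argument for the maximizing submatrix. First, $\herdisc$ is monotone under taking submatrices of $\vec{A}$: deleting rows only decreases $\|\vec{A}\vec{v}\|_2$, and restricting the column universe only contracts the family of subsets $W$ in the definition of $\herdisc$. So for any $K \times K$ submatrix $\vec{B}$ of $\vec{A}$, we have $\herdisc(\vec{A}) \geq \herdisc(\vec{B}) \geq \min_{\vec{v} \in \{\pm 1\}^K}\|\vec{B}\vec{v}\|_2$, and it suffices to lower bound this minimum for a suitably chosen $\vec{B}$.

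Let $\sigma_1 \geq \cdots \geq \sigma_K$ denote the singular values of $\vec{B}$. The basic spectral estimate is
$$\min_{\vec{v} \in \{\pm 1\}^K}\|\vec{B}\vec{v}\|_2 \;\geq\; \sigma_K(\vec{B})\,\|\vec{v}\|_2 \;=\; \sqrt{K}\,\sigma_K(\vec{B}),$$
since every $\pm 1$ vector in $\R^K$ has $\ell_2$ norm $\sqrt{K}$. This immediately yields the theorem whenever $\vec{B}$ is well-conditioned in the sense $\sigma_K(\vec{B}) \geq c\,|\det(\vec{B})|^{1/K}$ for some absolute constant $c$: one simply gets $\herdisc(\vec{A}) \geq c\sqrt{K}|\det(\vec{B})|^{1/K}$.

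The heart of the argument is to show that among all square submatrices of $\vec{A}$, the one $\vec{B}$ maximizing $\sqrt{K}|\det(\vec{B})|^{1/K}$ is necessarily well-conditioned in this sense. For this I would use the Cauchy--Binet identity: for a $K \times K$ matrix $\vec{B}$ and any $K' < K$,
$$\sum_{|R|=|S|=K'}|\det(\vec{B}_{R,S})|^2 \;=\; e_{K'}(\sigma_1^2,\ldots,\sigma_K^2) \;\geq\; \prod_{i=1}^{K'}\sigma_i^2,$$
where $e_{K'}$ is the elementary symmetric polynomial. Averaging produces a $K' \times K'$ submatrix $\vec{B}'$ of $\vec{B}$ (hence of $\vec{A}$) whose determinant satisfies $|\det(\vec{B}')|^{1/K'} \gtrsim (\prod_{i \leq K'}\sigma_i)^{1/K'}$ up to a binomial-coefficient factor. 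Since the geometric mean of the top $K'$ singular values is always at least that of all $K$, this $\vec{B}'$ has a normalized determinant no worse than $\vec{B}$'s; moreover, if $\sigma_K$ is much smaller than $|\det(\vec{B})|^{1/K}$, the geometric mean of the top $K-1$ values is strictly larger than $|\det(\vec{B})|^{1/K}$ by a factor sufficient to give $\sqrt{K-1}|\det(\vec{B}')|^{1/(K-1)} > \sqrt{K}|\det(\vec{B})|^{1/K}$, contradicting the maximality of $\vec{B}$. Thus the maximizer satisfies $\sigma_K(\vec{B}) \geq c\,|\det(\vec{B})|^{1/K}$, and the spectral bound from the previous paragraph closes the argument.

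The main technical obstacle is the calibration in this optimality argument: the Cauchy--Binet averaging step introduces a binomial coefficient loss $\binom{K}{K'}^{-1}$, which after the $(K')$-th root becomes a factor like $K^{O(1/K')}$ tending to $1$, and one must check it is dominated by the first-order gain coming from the ill-conditioning $\sigma_K \ll |\det(\vec{B})|^{1/K}$. This requires a careful logarithmic expansion or a potential-function argument choosing $K'$ adaptively to the steepness of the singular value sequence, so that the resulting well-conditioning constant $c$ — and hence $C'$ — is universal, independent of $K$ and the matrix dimensions.
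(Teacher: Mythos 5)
There is a genuine gap, and it sits exactly where you flagged it: the claim that the maximizer $\vec{B}$ of $\sqrt{K}|\det(\vec{B})|^{1/K}$ must satisfy $\sigma_K(\vec{B}) \geq c\,|\det(\vec{B})|^{1/K}$ with a universal constant $c$. Your own Cauchy--Binet calculation cannot deliver this. At level $K'=K-1$ there are $K^2$ candidate minors, so averaging only guarantees a submatrix $\vec{B}'$ with $|\det(\vec{B}')| \geq \frac{1}{K}\prod_{i\leq K-1}\sigma_i$; feeding this into maximality gives $\prod_{i\leq K-1}\sigma_i \leq \sqrt{e}\,K\,G^{K-1}$ (writing $G=|\det(\vec{B})|^{1/K}$), hence only $\sigma_K \geq c\,G/K$. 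That yields $\herdisc(\vec{A}) \geq \sqrt{K}\,\sigma_K \geq c\,G/\sqrt{K}$, which is weaker than the theorem by a factor of $K$. The adaptive choice of $K'$ does not repair this: running the same argument at level $K'$ bounds the geometric mean of the top $K'$ singular values and hence lower-bounds $\sigma_{K'+1}$, and the binomial loss $\binom{K}{K'}^{1/K'}$ is only harmless when $K-K'$ is a constant fraction of $K$; so one can conclude $\sigma_{\lceil K/2\rceil} \gtrsim G$, but never anything about $\sigma_K$ beyond the $G/K$ bound. Worse, the target claim is most likely false, not merely hard to calibrate: take $\vec{A}=\vec{U}\,\diag(1,\ldots,1,K^{-0.9})\,\vec{V}^T$ with generic rotations $\vec{U},\vec{V}$. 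Then $G\approx 1$, all proper minors are (by Jacobi's identity and the spreading of $\vec{A}^{-1}$'s entries) too small to beat $\sqrt{K}G$, so the full matrix is essentially the maximizer, yet $\sigma_{\min}(\vec{A})=K^{-0.9}\ll G$. In that example $\min_{\vec{v}\in\{\pm1\}^K}\|\vec{A}\vec{v}\|_2$ is still $\Omega(\sqrt{K})$ --- but only because a $\pm1$ vector cannot align with the single shrunk direction, i.e.\ the inequality $\min_{\vec{v}}\|\vec{B}\vec{v}\|_2 \geq \sqrt{K}\sigma_K$ that your argument relies on is itself too lossy; the well-conditioning of the maximizer is not the mechanism behind the theorem.

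For comparison, the paper does not prove this statement but cites Lov\'{a}sz, Spencer, and Vesztergombi, and their argument avoids spectra of the chosen submatrix altogether. One first passes from hereditary discrepancy to linear discrepancy (the LSV transference lemma: rounding any fractional vector to an integer one costs at most a constant times the hereditary discrepancy of the submatrix on the fractional coordinates, hence at most $O(\herdisc(\vec{A}))$), and then lower-bounds the linear discrepancy of a $K\times K$ submatrix $\vec{B}$ by a volume pigeonhole: the $2^K$ possible roundings $\vec{z}\in\{0,1\}^K$ induce sets $\{\vec{c}\in[0,1]^K : \|\vec{B}(\vec{c}-\vec{z})\|_2 \leq D\}$, each of volume at most $\mathrm{vol}(D\,B_2^K)/|\det(\vec{B})| = (c_0 D/\sqrt{K})^K/|\det(\vec{B})|$, and these must cover the unit cube; hence $D \geq c_1\sqrt{K}\,|\det(\vec{B})|^{1/K}$. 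The $\sqrt{K}$ in the theorem thus comes from the volume of the Euclidean ball, not from $\sqrt{K}\,\sigma_{\min}$. If you want to salvage your write-up, I would replace the well-conditioning step by this transference-plus-volume argument; your monotonicity observations in the first paragraph remain useful for reducing to a single square submatrix.
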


\begin{corollary}
  \label{cor:det-lb}
  Let $\vec{A}$ be an $M \times N$ complex matrix and let $\alg$ be an
  $(\eps, \delta)$-differentially private algorithm for sufficiently
  small constant $\eps$ and $\delta$. There exists a constant $C$ and a
  vector $\vec{x} \in \{0, 1\}^N$ such that, for any $K \times K$
  submatrix $\vec{B}$ of $\vec{A}$,
    $\E[\|\alg(\vec{x}) - \vec{Ax}\|_2^2] \geq
    C\frac{K|\det(\vec{B})|^{2/K}}{\log^2 N}$.
\end{corollary}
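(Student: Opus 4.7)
The plan is to simply chain the two previously stated results together. The statement follows by combining Theorem~\ref{thm:disc-lb}, which lower-bounds the expected squared error of any $(\eps,\delta)$-differentially private algorithm in terms of the hereditary discrepancy $\herdisc(\vec{A})$, with the determinant lower bound of Lov\'asz, Spencer, and Vesztergombi, which lower-bounds $\herdisc(\vec{A})$ by $C'\sqrt{K}|\det(\vec{B})|^{1/K}$ for \emph{any} $K \times K$ submatrix $\vec{B}$ of $\vec{A}$.

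Concretely, I would start by invoking Theorem~\ref{thm:disc-lb}: for sufficiently small $\eps,\delta$, there exists an absolute constant $C_1$ and an $\vec{x} \in \{0,1\}^N$ (depending only on $\vec{A}$, not on the choice of submatrix) such that
\[
\E[\|\alg(\vec{x}) - \vec{Ax}\|_2^2] \geq C_1 \frac{\herdisc(\vec{A})^2}{\log^2 N}.
\]
Then fix any $K \times K$ submatrix $\vec{B}$ of $\vec{A}$. By the Lov\'asz--Spencer--Vesztergombi theorem, $\herdisc(\vec{A}) \geq C'\sqrt{K}\,|\det(\vec{B})|^{1/K}$, so squaring gives $\herdisc(\vec{A})^2 \geq (C')^2 K\,|\det(\vec{B})|^{2/K}$. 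Substituting into the previous display with $C = C_1 (C')^2$ yields the desired bound.

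There is no real obstacle here; the only subtle point worth flagging is the order of quantifiers. Theorem~\ref{thm:disc-lb} produces a single hard input $\vec{x}$ for $\vec{A}$, and the lower bound of Lov\'asz--Spencer--Vesztergombi holds simultaneously for every $K \times K$ submatrix $\vec{B}$; hence the same $\vec{x}$ witnesses the inequality for every choice of $\vec{B}$, which matches the quantification in the statement. No dependence of the constant $C$ on $K$ or on $\vec{B}$ is introduced.
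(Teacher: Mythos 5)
Your proof is correct and follows exactly the route the paper intends: the corollary is obtained by chaining Theorem~\ref{thm:disc-lb} with the Lov\'asz--Spencer--Vesztergombi determinant bound, squaring the latter and absorbing constants. Your remark on the quantifier order (one hard input $\vec{x}$ witnessing the bound for every submatrix $\vec{B}$) is the right subtlety to flag and matches the statement's quantification.
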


\subsection{Proof of Theorem~\ref{thm:main-lb}}

We exploit the power of the determinant lower bound of
Corollary~\ref{cor:det-lb} by combining the simple but very useful
observation that projections do not increase mean squared error with a
lower bound on the maximum determinant of a submatrices of a
rectangular matrix. We present these two ingredients in sequence and
finish the section with a proof of Theorem~\ref{thm:main-lb}. 

\begin{lemma}
  \label{lm:subsp-det-lb}
  Let $\vec{A}$ be an $M \times N$ complex matrix and let $\alg$ be an
  $(\eps, \delta)$-differentially private algorithm for sufficiently
  small constant $\eps$ and $\delta$. There exists a constant $C$ and
  a vector $\vec{x} \in \{0, 1\}^N$ such that for any $L \times M$
  projection matrix $\vec{P}$ and for any $K \times K$ submatrix
  $\vec{B}$ of $\vec{PA}$,
    $\E[\|\alg(\vec{x}) - \vec{Ax}\|_2^2] \geq
    C\frac{K|\det(\vec{B})|^{2/K}}{\log^2 N}$.
\end{lemma}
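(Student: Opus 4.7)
The plan is to combine two simple ingredients: post-processing preserves $(\eps,\delta)$-differential privacy, and orthogonal projections do not expand $\ell_2$ norms. Given the hypothesized private algorithm $\alg$ and an $L \times M$ projection matrix $\vec{P}$, I first consider the composed algorithm $\alg'$ defined by $\alg'(\vec{x}) = \vec{P}\alg(\vec{x})$. Because $\vec{P}$ acts deterministically on the output of $\alg$, for any measurable $T \subseteq \C^L$ and any pair of neighbors $\vec{x}, \vec{x}'$, the event $\{\alg'(\vec{x}) \in T\}$ is identical to $\{\alg(\vec{x}) \in \vec{P}^{-1}(T)\}$, so the $(\eps,\delta)$-differential privacy guarantee transfers verbatim from $\alg$ to $\alg'$.

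Next, I exploit the fact that $\vec{P}$ is $1$-Lipschitz in $\ell_2$. Realization by realization of the randomness of $\alg$,
\[
\|\vec{P}(\alg(\vec{x}) - \vec{A}\vec{x})\|_2 \;\leq\; \|\alg(\vec{x}) - \vec{A}\vec{x}\|_2;
\]
identifying the left-hand side with $\|\alg'(\vec{x}) - \vec{P}\vec{A}\vec{x}\|_2$, squaring, and taking expectations yields
\[
\E\bigl[\|\alg(\vec{x}) - \vec{A}\vec{x}\|_2^2\bigr] \;\geq\; \E\bigl[\|\alg'(\vec{x}) - \vec{P}\vec{A}\vec{x}\|_2^2\bigr].
\]
To finish, I invoke Corollary~\ref{cor:det-lb} on the $(\eps,\delta)$-differentially private algorithm $\alg'$, viewed as an approximation of the linear map with matrix $\vec{P}\vec{A}$: for every $K \times K$ submatrix $\vec{B}$ of $\vec{P}\vec{A}$ there exists an $\vec{x} \in \{0,1\}^N$ with $\E[\|\alg'(\vec{x}) - \vec{P}\vec{A}\vec{x}\|_2^2] \geq C\, K|\det(\vec{B})|^{2/K}/\log^2 N$. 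Chaining the two inequalities gives exactly the claimed lower bound.

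The argument is essentially a one-line reduction and poses no real obstacle; the only point worth flagging is that Corollary~\ref{cor:det-lb} chooses its hard instance $\vec{x}$ after one fixes $\vec{P}$ and $\vec{B}$, which is all that is needed since the $\MSE$ objective is a supremum over $\vec{x}$. The conceptual value of the lemma is that the projection trick will let us, in the proof of Theorem~\ref{thm:main-lb}, pick $\vec{P}$ to be the projection onto the span of the top-$K$ Fourier eigenvectors of $\vec{H}$; then $\vec{P}\vec{H}$ is effectively diagonal in the Fourier basis and a natural $K \times K$ submatrix $\vec{B}$ has determinant equal (up to the change of basis) to $\prod_{m=0}^{K-1} \lambda_m$, producing the spectral lower bound (\ref{eq:spec-lb}).
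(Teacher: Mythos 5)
Your proof is correct and is essentially the paper's own argument: define the post-processed algorithm $\vec{P}\alg(\cdot)$, note it inherits $(\eps,\delta)$-differential privacy, use that a projection is an $\ell_2$ contraction to compare errors, and apply Corollary~\ref{cor:det-lb} to the matrix $\vec{PA}$. Your remark about the quantifier on $\vec{x}$ (chosen after $\vec{P}$ and $\vec{B}$, harmless since $\MSE$ is a supremum over $\vec{x}$) applies equally to the paper's proof and is the right way to read the statement.
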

\begin{proof}
  We show that there exists an
  $(\eps, \delta)$-differentially private algorithm $\mathcal{B}$
  that satisfies
  \begin{equation}
    \label{eq:proj-alg}
    \E[\|\mathcal{B}(\vec{x}) - \vec{PAx}\|_2^2] \leq
    \E[\|\alg(\vec{x}) - \vec{Ax}\|_2^2].
  \end{equation}
  Then we can apply Corollary~\ref{cor:det-lb} to $\mathcal{B}$ and
  $\vec{PA}$ to prove the corollary.

  The algorithm $\mathcal{B}$ on input $\vec{x}$ outputs $\vec{Py}$
  where $\vec{y} = \alg(\vec{x})$. Since $\mathcal{B}$ is a function
  of $\alg(\vec{x})$ only, it satisfies $(\eps, \delta)$-differential
  privacy by Theorem~\ref{thm:simple-composition}. It satisfies
  (\ref{eq:proj-alg}) since for any $\vec{y}$ and any projection
  matrix $\vec{P}$ it holds that $\|\vec{P}(\vec{y - Ax})\|_2 \leq
  \|\vec{y - Ax}\|_2$.
\end{proof}

Our main technical tool is a linear algebraic fact connecting the
determinant lower bound for $\vec{A}$ and the determinant lower bound
for any projection of $\vec{A}$.

\begin{lemma}
  \label{lm:subsp-det}
  Let $\vec{A}$ be an $M \times N$ complex matrix with singular values
  $\lambda_1 \geq \ldots \geq \lambda_N$ and let $\vec{P}$ be a
  projection matrix onto the span of the left singular vectors
  corresponding to $\lambda_1, \ldots, \lambda_K$. There exists a
  constant $C$ and $K \times K$ submatrix $\vec{B}$ of $\vec{PA}$ such that
  \begin{equation*}
    |\det(\vec{B})|^{1/K} \geq C \sqrt{\frac{K}{N}} \left(\prod_{i =
        1}^K{\lambda_i}\right)^{1/K}
  \end{equation*}
\end{lemma}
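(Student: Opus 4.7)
My plan is to realize $\vec{P}$ as the $K \times M$ rectangular matrix $\vec{U}_K^H$, where $\vec{U}_K$ collects the top $K$ left singular vectors of $\vec{A}$ as its columns, rather than as the square $M \times M$ orthogonal projection $\vec{U}_K\vec{U}_K^H$. This is a valid realization of the stated projection in the $L \times M$ sense permitted by Lemma~\ref{lm:subsp-det-lb}, and the particular orthonormal basis chosen for the top-$K$ singular subspace is moreover irrelevant: any two such bases differ by a $K \times K$ unitary $\vec{Q}$ acting on the left of $\vec{PA}$, which leaves $|\det(\vec{B})|$ invariant for every $K \times K$ submatrix $\vec{B}$. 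With this realization, the SVD $\vec{A} = \vec{U}\bSigma\vec{V}^H$ gives $\vec{PA} = \vec{U}_K^H\vec{U}\bSigma\vec{V}^H = \bSigma_K \vec{V}_K^H$, a $K \times N$ matrix, where $\bSigma_K = \diag(\lambda_1,\ldots,\lambda_K)$ and $\vec{V}_K$ is the $N \times K$ matrix of top-$K$ right singular vectors.

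First I would observe that since $\vec{PA}$ has exactly $K$ rows, every $K \times K$ submatrix $\vec{B}$ of $\vec{PA}$ is specified by a single column subset $S \subseteq [N]$ with $|S|=K$, and factors as $\vec{B} = \bSigma_K (\vec{V}_K^H)|_S$, giving $|\det(\vec{B})| = \bigl(\prod_{i=1}^K \lambda_i\bigr)\,|\det((\vec{V}_K^H)|_S)|$. Then, applying the Cauchy-Binet identity to $\vec{V}_K^H \vec{V}_K = \vec{I}_K$ yields
\[
\sum_{S \subseteq [N],\,|S|=K} |\det((\vec{V}_K^H)|_S)|^2 = \det(\vec{V}_K^H \vec{V}_K) = 1,
\]
so averaging over the $\binom{N}{K}$ subsets produces some $S^\ast$ with $|\det((\vec{V}_K^H)|_{S^\ast})|^2 \geq \binom{N}{K}^{-1}$. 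Taking $K$-th roots and using the standard estimate $\binom{N}{K} \leq (eN/K)^K$ gives $|\det((\vec{V}_K^H)|_{S^\ast})|^{1/K} \geq e^{-1/2}\sqrt{K/N}$, and combining with the factorization of $|\det(\vec{B})|$ yields the claim with $C = e^{-1/2}$.

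The hard part is the choice of representation of $\vec{P}$. If instead one took the square $M \times M$ orthogonal projection $\vec{U}_K\vec{U}_K^H$, then $\vec{PA}$ would be $M \times N$ of rank $K$, and every $K \times K$ submatrix $\vec{B}$ would be indexed by both a row set $T \subseteq [M]$ and a column set $S \subseteq [N]$, producing the factorization $|\det(\vec{B})| = |\det((\vec{U}_K)_T)| \cdot \prod_i \lambda_i \cdot |\det((\vec{V}_K^H)|_S)|$, where $(\vec{U}_K)_T$ denotes the submatrix on rows $T$. Independently maximizing the two outer factors via Cauchy-Binet only gives the weaker bound $|\det(\vec{B})|^{1/K} \gtrsim (K/\sqrt{MN})\prod_i \lambda_i^{1/K}$, losing a multiplicative $\sqrt{K/M}$ relative to what is claimed. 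The compressed $K \times M$ form of $\vec{P}$ sidesteps this entirely by absorbing the row-selection step into the choice of $\vec{P}$ itself, leaving only the single column-side Cauchy-Binet estimate that achieves the desired $\sqrt{K/N}$ factor.
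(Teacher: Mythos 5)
Your proof is correct and follows essentially the same route as the paper's: both realize $\vec{P}$ in its compressed $K\times M$ form so that $\vec{PA}$ is $K\times N$, apply the Cauchy--Binet identity to reduce to an average over the $\binom{N}{K}$ column subsets, and finish with $\binom{N}{K}\le (eN/K)^K$. The only cosmetic difference is that you factor out $\bSigma_K$ via the SVD before invoking Cauchy--Binet, whereas the paper applies Binet--Cauchy directly to $\det(\vec{C}\vec{C}^H)=\prod_{i=1}^K\lambda_i^2$; these are the same computation.
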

\begin{proof}
  Let $\vec{C} = \vec{PA}$ and consider the matrix $\vec{D} = \vec{C}\vec{C}^H$. It has eigenvalues
  $\lambda_1^2, \ldots, \lambda_K^2$, and therefore 
  \[
  \det(\vec{D}) = \prod_{i =  1}^K{\lambda^2_i}.
  \]
  On the other hand, by the Binet-Cauchy formula
  for the determinant, we have
  \begin{align*}
    \det(\vec{D}) &= \det(\vec{C}\vec{C}^H) \\
    &= \sum_{S \in {[N]\choose K}}{\det(\vec{C}|_S)^2} \\&
    \leq {N \choose K}\max_{S \in {[N]\choose   K}}{\det(\vec{C}|_S)^2}.
  \end{align*}
  Rearranging and raising to the power $1/2K$, we get that there
  exists a $K \times K$ submatrix of $\vec{C}$ such that
  \begin{equation*}
    |\det(\vec{B})|^{1/K} \geq {N \choose K}^{-1/2K}\left(\prod_{i =
        1}^K{\lambda_i}\right)^{1/K}.
  \end{equation*}
  Using the bound ${N \choose K} \leq \left(\frac{Ne}{K}\right)^K$
  completes the proof.
\end{proof}

We can now prove our main lower bound theorem by combining
Lemma~\ref{lm:subsp-det-lb} and Lemma~\ref{lm:subsp-det}.

\begin{proof}[of Theorem~\ref{thm:main-lb}]
  As usual, we will express $\vec{h} \ast \vec{x}$ as the linear map
  $\vec{Hx}$, where $\vec{H}$ is the convolution matrix for
  $\vec{h}$. By Lemma~\ref{lm:subsp-det-lb}, it suffices to show
  that for each $K$, there exists a projection matrix
  $\vec{P}$ and a $K \times K$ submatrix $\vec{B}$ of $\vec{PH}$ such
  that $|\det(B)|^{1/K} \geq \Omega(\sqrt{K}|\hat{h}_K|)$. Recall
  that the eigenvalues of $\vec{H}$ are $\sqrt{N}\hat{h}_0, \ldots,
  \sqrt{N}\hat{h}_{N-1}$, and, therefore, the $i$-th singular value of
  $\vec{H}$ is $\sqrt{N}|\hat{h}_{i-1}|$. By Lemma~\ref{lm:subsp-det},
  there exists a constant $C$, a projection matrix $P$, and a submatrix $\vec{B}$
  of $\vec{PH}$ such that
  \begin{equation*}
    |\det(\vec{B})|^{1/K} \geq C\sqrt{\frac{K}{N}} \left(\prod_{i =
        0}^{K-1}{\sqrt{N}|\hat{h}_i|}\right)^{1/K} \geq
    C\sqrt{K}|\hat{h}_K|.
  \end{equation*}
  This completes the proof.
\end{proof}

\section{Upperbounds}\label{sec:UB}

Standard $(\eps, \delta)$-privacy techniques such as input perturbation or output
perturbation in the time or in the frequency domain lead to mean
squared error, at best, proportional to $\|\vec{h}\|_2^2$. 

Next we describe an algorithm which is nearly optimal for $(\eps,
\delta)$-differential privacy. This algorithm is derived by
formulating the error of a natural class of private algorithms as a
convex program and finding a closed form solution. An alternative
solution that partitions the spectrum of $\vec{H}$ geometrically is
described in Appendix~\ref{app:specpart}. The class of algorithms we
consider is those which add independent Gaussian
noise to the Fourier coefficients of the private input
$\vec{x}$. Interestingly, we show that this simple strategy is nearly
optimal for computing convolution maps. 




Consider the class of algorithms, which first add independent
Laplacian noise variables $z_i=\Lap(0,b_i)$ to the Fourier coefficients $\hat{x}_i$ to compute
$\tilde{x}_i = \hat{x}_i + z_i$,
and then output
$\tilde{\by}= \bF_N^H\hat{\bH}\tilde{\bx}$.
This class of algorithms is parameterized by the vector
$\bb=(b_0,\ldots,b_{N-1})$; a member of the class will be denoted
$\alg(\bb)$ in the sequel.  The question we address is: For given
$\eps,\delta>0$, how should the noise parameters $\bb$ be chosen such
that the algorithm $\alg(\bb)$ achieves $(\eps,\delta)$-differential
privacy in $\bx$ for $\ell_1$ neighbors, while minimizing the mean
squared error $\MSE$? It turns out that by convex programming duality we can
derive a closed form expression for the optimal $\bb$, and moreover,
the optimal $\alg(\bb)$ is nearly optimal \emph{among all $(\eps,
  \delta)$-differentially private algorithms}. The optimal parameters
are used in Algorithm~\ref{alg:CP}.


\begin{algorithm}[t]
  \caption{\textsc{Fourier Mechanism}}\label{alg:CP}
  {\fontsize{9}{9}\selectfont
  \begin{algorithmic}
    \STATE Set $\gamma = \frac{2\ln(1/\delta) \| \vec{\hat{h}} \|_1}{\eps^2 N }$
    \STATE Compute $\vec{\hat{x}} = \bF_N\vec{x}$ and $\vec{\hat{h}} =
    \bF_N \vec{x}$.
    \FORALL{$i \in \{0,\ldots, N-1\}$}
    \IF{$|\hat{h}_i| > 0$}
        \STATE Set $z_i =\Lap\left(\sqrt{\frac{\gamma}{|\hat{h}_i|}}\right)$
    \ELSIF{$|\hat{h}_i| = 0$}
        \STATE Set $z_i = 0$
    \ENDIF
    \STATE Set $\tilde{x}_i = \hat{x}_i + z_i$.
    \STATE Set $\bar{y}_i = \sqrt{N}\hat{h}_i \tilde{x}_i$.
    \ENDFOR
    \STATE Output $\vec{\tilde{y}} =  \bF_N^H \vec{\bar{y}}$
  \end{algorithmic}
}
\end{algorithm}


\begin{theorem}
  \label{thm:CP-error}
  Algorithm~\ref{alg:CP} satisfies $(\eps,\delta)$-differential privacy, and  achieves expected mean squared error
  \begin{equation}\label{eq:accuracy}
    \MSE = 4  \frac{\ln(1/\delta)}{\eps^2 N}\| \vec{\hat{h}} \|_1^2.
  \end{equation}
  Moreover, Algorithm~\ref{alg:CP} runs in time $O(N \log N)$.
\end{theorem}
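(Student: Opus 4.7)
The proof splits into three claims: $(\eps,\delta)$-privacy, the MSE bound, and the $O(N\log N)$ running time. The plan is to view Algorithm~\ref{alg:CP} as first releasing the noisy Fourier vector $(\tilde{x}_0,\ldots,\tilde{x}_{N-1})$ (where all privacy is spent) and then applying deterministic post-processing, namely coordinate-wise multiplication by $\sqrt{N}\hat{h}_i$ followed by $\bF_N^H$, to produce $\tilde{\vec{y}}$. Once this reduction is in place, the MSE follows from unitarity of $\bF_N^H$ and the runtime is immediate.

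For privacy I would first bound the sensitivity of each Fourier coefficient. Since every entry of $\bF_N$ has modulus $1/\sqrt{N}$, for any $\ell_1$-neighbors $\bx,\bx'$ we have $|\hat{x}_m-\hat{x}'_m|\le \|\bx-\bx'\|_1/\sqrt{N}\le 1/\sqrt{N}$. Applying Theorem~\ref{th:eps0-DiffPriv} with sensitivity $s=1/\sqrt{N}$ to $\tilde{x}_m = \hat{x}_m + z_m$, $z_m\sim\Lap(0,b_m)$, $b_m=\sqrt{\gamma/|\hat{h}_m|}$, gives $(\eps_m,0)$-differential privacy with $\eps_m = 1/(\sqrt{N}\,b_m) = \sqrt{|\hat{h}_m|/(N\gamma)}$; coordinates with $\hat{h}_m=0$ add no noise and contribute $\eps_m=0$. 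Feeding the joint release into the advanced composition theorem (Theorem~\ref{thm:fancy-comp}), the global privacy parameter is controlled by
\[
\sum_{m=0}^{N-1}\eps_m^2 = \sum_{m}\frac{|\hat{h}_m|}{N\gamma} = \frac{\|\vec{\hat{h}}\|_1}{N\gamma},
\]
so the choice $\gamma = 2\ln(1/\delta)\|\vec{\hat{h}}\|_1/(\eps^2 N)$ makes $\sqrt{2\ln(1/\delta)\sum_m\eps_m^2}$ collapse exactly to $\eps$. The deterministic post-processing preserves $(\eps,\delta)$-privacy (formally via Theorem~\ref{thm:simple-composition} with a trivially private second algorithm).

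For the MSE I would write $\tilde{\vec{y}} - \vec{y} = \bF_N^H \vec{w}$, where $w_i = \sqrt{N}\hat{h}_i z_i$. Unitarity of $\bF_N^H$ preserves the $\ell_2$ norm, so
\[
\E[\|\tilde{\vec{y}}-\vec{y}\|_2^2] = \sum_{i=0}^{N-1} N|\hat{h}_i|^2\,\E[z_i^2] = \sum_i N|\hat{h}_i|^2\cdot 2b_i^2 = 2N\gamma\|\vec{\hat{h}}\|_1,
\]
using $\E[z_i^2]=2b_i^2$ and $|\hat{h}_i|^2 b_i^2 = |\hat{h}_i|\gamma$. Dividing by $N$ and substituting the value of $\gamma$ yields $\MSE = 4\ln(1/\delta)\|\vec{\hat{h}}\|_1^2/(\eps^2 N)$. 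The running time claim is then immediate: the forward and inverse FFTs dominate at $O(N\log N)$, while the per-coordinate loop contributes only $O(N)$.

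The main obstacle I expect is not mathematical but notational: $\hat{x}_m$ is complex, whereas Theorem~\ref{th:eps0-DiffPriv} is stated for real-valued $f$. One must either apply the Laplace mechanism to the real and imaginary parts of $\hat{x}_m$ separately (each of which still has sensitivity at most $1/\sqrt{N}$), or invoke a complex extension of the mechanism. Either way only absolute constants change, but the accounting must be done carefully so that the per-coordinate $\eps_m$ comes out to exactly $\sqrt{|\hat{h}_m|/(N\gamma)}$, since this is precisely what makes the advanced composition sum telescope to $\eps^2/(2\ln(1/\delta))$ and produces the clean constant $4$ in the final MSE.
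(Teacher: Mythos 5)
Your proof is correct, and all the essential ingredients coincide with the paper's: the $1/\sqrt{N}$ sensitivity of each Fourier coefficient, Theorem~\ref{th:eps0-DiffPriv} giving per-coordinate $\eps_i = 1/(\sqrt{N}b_i)$, Theorem~\ref{thm:fancy-comp} to assemble the global $(\eps,\delta)$ guarantee, Theorem~\ref{thm:simple-composition} for the post-processing step, and the unbiasedness-plus-unitarity computation of the MSE. The one structural difference is the route to the noise magnitudes: the paper's proof does not simply verify the parameters hard-coded in Algorithm~\ref{alg:CP}, but sets up the class $\alg(\bb)$, shows that privacy corresponds to the constraint $\sum_{i\in I} 1/(Nb_i^2)=\eps^2/(2\ln(1/\delta))$ and MSE to the objective $2\sum_{i\in I}|\hat{h}_i|^2 b_i^2$, and then extracts $b_i^\ast=\sqrt{2\ln(1/\delta)\|\hat{\bh}\|_1/(N\eps^2|\hat{h}_i|)}$ from the KKT conditions (Appendix~\ref{app:KKT}). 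Your plug-in verification is shorter and fully suffices for the statement as written; what the paper's convex-programming detour buys is the additional fact that these parameters are \emph{optimal within the class} of independent-Laplace-in-the-Fourier-domain mechanisms, which motivates the algorithm but is not claimed in Theorem~\ref{thm:CP-error} itself. One further remark: the complication you anticipate about $\hat{x}_i$ being complex is real but is one the paper silently absorbs by stating Theorem~\ref{th:eps0-DiffPriv} for $f:[0,1]^N\rightarrow\C$ with sensitivity measured by the modulus; if instead you split into real and imaginary parts and pay for both in the composition, the constant $4$ in (\ref{eq:accuracy}) would change by an absolute factor, so the exact constant in the theorem implicitly relies on the complex form of the Laplace mechanism rather than on the splitting you describe.
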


Before proving Theorem~\ref{thm:CP-error}, we show that it implies
that Algorithm~\ref{alg:CP} is almost optimal \emph{for any given
  $\vec{h}$}. 

\begin{theorem}\label{thm:CP-opt}
  For any $\vec{h}$, Algorithm ~\ref{alg:CP} satisfies $(\eps,
  \delta)$-differential privacy and achieves expected mean squared
  error $O\left(\specLB(\vec{h})\frac{\log^2{N} \log^2{|I|} \ln (1/\delta)}{\eps^2}\right)$.
\end{theorem}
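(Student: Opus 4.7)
The privacy guarantee and the error bound $\MSE \leq 4 \frac{\ln(1/\delta)}{\eps^2 N}\|\hat{\vec{h}}\|_1^2$ are already in hand from Theorem~\ref{thm:CP-error}. So the entire content of Theorem~\ref{thm:CP-opt} reduces to the purely deterministic claim
\[
\|\hat{\vec{h}}\|_1^2 \;=\; O\!\left(\max_{K=1}^{N} K^2 \hat{h}_{K-1}^2 \cdot \log^2 |I|\right),
\]
where as in Theorem~\ref{thm:main-lb} we have relabeled so that $|\hat{h}_0| \geq |\hat{h}_1| \geq \ldots \geq |\hat{h}_{N-1}|$ and $I = \{i : \hat{h}_i \neq 0\}$. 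Once this inequality is established, plugging in the definition $\specLB(\vec{h}) = \max_K \frac{K^2 \hat{h}_{K-1}^2}{N \log^2 N}$ and combining with Theorem~\ref{thm:CP-error} immediately yields the claimed bound, since $\max_K K^2 \hat{h}_{K-1}^2 = N \log^2 N \cdot \specLB(\vec{h})$.

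The plan is to prove the deterministic inequality via a direct ``sorted tail'' argument. Let $M := \max_{K \geq 1} K |\hat{h}_{K-1}|$. By definition of the maximum, $K |\hat{h}_{K-1}| \leq M$ for every $K$, so after relabeling, the sorted magnitudes satisfy $|\hat{h}_{K-1}| \leq M/K$ for all $K \in \{1, \ldots, N\}$. Since only the first $|I|$ of the sorted coefficients are nonzero,
\[
\|\hat{\vec{h}}\|_1 \;=\; \sum_{K=1}^{|I|} |\hat{h}_{K-1}| \;\leq\; \sum_{K=1}^{|I|} \frac{M}{K} \;=\; M \cdot H_{|I|} \;=\; O(M \log |I|),
\]
where $H_n$ is the $n$-th harmonic number. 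Squaring gives $\|\hat{\vec{h}}\|_1^2 = O(M^2 \log^2 |I|) = O\!\left(\max_K K^2 \hat{h}_{K-1}^2 \cdot \log^2 |I|\right)$, which is the required inequality.

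The main obstacle---if one can call it that---is being careful about the support size: one must restrict the sum to $|I|$ terms rather than $N$, otherwise a naive bound of $O(M \log N)$ appears, giving the weaker error $O(\specLB(\vec{h}) \log^4 N \ln(1/\delta)/\eps^2)$. The restriction to nonzero coefficients is what replaces one $\log^2 N$ factor with $\log^2 |I|$, which can be significantly better when $\vec{h}$ has sparse Fourier support (e.g.\ for the $w$-DNF applications discussed later). After assembling the pieces, the final calculation is routine: by Theorem~\ref{thm:CP-error},
\[
\MSE \leq 4 \frac{\ln(1/\delta)}{\eps^2 N}\|\hat{\vec{h}}\|_1^2 \leq O\!\left(\frac{\ln(1/\delta)}{\eps^2 N} \cdot N \log^2 N \cdot \specLB(\vec{h}) \cdot \log^2 |I|\right) = O\!\left(\specLB(\vec{h}) \cdot \frac{\log^2 N \log^2 |I| \ln(1/\delta)}{\eps^2}\right),
\]
which completes the proof.
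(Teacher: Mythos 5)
Your proposal is correct and follows essentially the same route as the paper's own proof: both arguments bound $\|\hat{\vec{h}}\|_1$ by $H_{|I|}\cdot\max_{K} K|\hat{h}_{K-1}| \le H_{|I|}\,\sqrt{N}\,\log N\,\sqrt{\specLB(\vec{h})}$ after sorting the Fourier coefficients and restricting the sum to the support $I$, then square and substitute into the error bound of Theorem~\ref{thm:CP-error}. The only cosmetic difference is that you name the maximum $M$ and write $|\hat{h}_{K-1}|\le M/K$, while the paper writes the sum as $\sum_i \frac{1}{i}\, i|\hat{h}_{i-1}|$; the content is identical and there are no gaps.
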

\begin{proof}
Assume that $|\hat{h}_0|>|\hat{h}_1| >\ldots>|\hat{h}_{N-1}|$. Then, by definition of $I=\{0\leq i \leq N-1 : |\hat{h}_i|>0 \}$, we have  $|\hat{h}_j|=0$, for all $j>|I|-1$. Thus,
\begin{align}
\|\hat{\bh}\|_1 &= \sum_{i=0}^{ |I|-1} |\hat{h}_i| 
= \sum_{i=1}^{|I|} \frac{1}{i} i |\hat{h}_{i-1}| \notag\\
&\leq \left(\sum_{i=1}^{|I|} \frac{1}{i}\right) \sqrt{N} \log N \sqrt{\specLB(\vec{h})}\notag\\
&= H_{|I|} \sqrt{N} \log N \sqrt{\specLB(\vec{h})}, \label{eq:h1}
\end{align}
where $H_m=\sum_{i=1}^{m} \frac{1}{i}$ denotes the $m$-th harmonic number. Recalling that $H_m= O (\log m)$, and combining the bound (\ref{eq:h1}) with the expression of the MSE (\ref{eq:accuracy}) yields the desired bound.
\end{proof}

\begin{proof}[of Theorem~\ref{thm:CP-error}]
For running time, we note that our algorithm is no more expensive than
computing a Fast Fourier Transform, which can be done in  $O(N
\log N)$ arithmetic operations using the classical Cooley-Tukey algorithm, for
example. 

Denote the set $I=\{0\leq i \leq N-1 : |\hat{h}_i|>0 \}$. We formulate the problem of finding the algorithm $\alg(\bb)$ which minimizes MSE subject to privacy constraints as the following optimization problem:
\begin{align}
\min_{\{b_i\}_{i\in I}} & \sum_{i \in I} b_i^2 |\hat{h}_i|^2 \label{eq:obj}\\
\mbox{s.t.} &  \sum_{i \in I} \frac{1}{N b_i^2} = \frac{\eps^2}{2 \ln (1/\delta)}\label{eq:privCons}\\
&  b_i > 0, \forall i \in I.\label{eq:non-neg}
\end{align}
Next we justify this formulation.

\textbf{Privacy Constraint.}
We first show that the output $\tilde{\by}$ of an algorithm $\alg(\bb)$ is an $(\eps, \delta)$-differentially private function of $\vec{x}$, if the constraint (\ref{eq:privCons}) is satisfied.
%
Denote $\vec{\bar{y}}= \hat{\bH} \tilde{\bx}$. If $\vec{\bar{y}}$ is an $(\eps,\delta)$-differentially private function of $\vec{x}$, then by Theorem~\ref{thm:simple-composition}, $\vec{\tilde{y}}$ is also $(\eps,\delta)$-differentially private, since the computation of $\vec{\tilde{y}}$ depends only on $\bF_N^H$ and $\vec{\bar{y}}$ and not on $\vec{x}$ directly.
Thus we can focus on the requirements on $\bb$ for which $\vec{\bar{y}}$  is $(\eps,\delta)$ private.

If $i \notin I$, then $\bar{y}_i=0$ and does not affect privacy
regardless of $b_i$. Thus, we can set $b_i=0$ for all $i\notin I$.
If $i \in I$, we first characterize the $\ell_1$-sensitivity of $\hat{x}_i$  as a function of $\vec{x}$. Recall that $\hat{x}_i=\vec{f}_i^H \vec{x}$ is the inner product of $\vec{x}$ with the Fourier basis vector $\vec{f}_i$. The sensitivity of $\hat{x}_i$ is therefore $\| f_i \|_{\infty} = \frac{1}{\sqrt{N}}$, $\forall i$. Then, by Theorem~\ref{th:eps0-DiffPriv}, $\tilde{x}_i = \hat{x}_i + \Lap\left(0,b_i\right)$ is $\eps_i$-differentially private in $\vec{x}$, with $\eps_i =\frac{1}{\sqrt{N} b_i} $. The computation of $\bar{y}_i$ depends only on $\hat{h}_i$ and $\tilde{x}_i$, thus, by Theorem~\ref{thm:simple-composition}, $\bar{y}_i$  is $\frac{1}{\sqrt{N} b_i}$-differentially private in $\vec{x}$.

Finally, by Theorem~\ref{thm:fancy-comp}, $\vec{\bar{y}}$ is $(\eps,\delta)$ differentially private for any $\delta > 0$, as long as\junk{
\begin{align*}
    \eps^2
    = 2\ln(1/\delta) \sum_{i=0}^{N-1} \eps_i^2
    = 2\ln(1/\delta)\sum_{i\in I}\frac{1}{N b_i^2},
\end{align*}}
 constraint (\ref{eq:privCons}) holds.

\textbf{Accuracy Objective.}
We show that finding the algorithm $\alg(\bb)$ which minimizes the $\MSE$ is equivalent to finding the parameters $b_i \geq 0$, $i\in I$, which minimize the objective function (\ref{eq:obj}).
Note that $\tilde{\by}=\bF_N^H \hat{\vec{H}} \tilde{\vec{x}}= \bF_N^H \hat{\vec{H}} (\bF_N \bx +\bz) =\by + \bF_N^H \hat{\vec{H}} \bz$.  Thus, the output $\tilde{\by}$ is unbiased: $\E[\tilde{\by}]= \by$.
 The mean squared error is given by:
  \begin{align*}
  \MSE
  &=\frac{1}{N} \E [\| \bF_N^H \hat{\vec{H}} \bz \|_2^2 ]\\
  &=\frac{1}{N} \E [\tr (\bF_N^H \hat{\vec{H}} \bz \bz^H \hat{\vec{H}}^H \bF_N)]\\
  &=\frac{1}{N}\tr (\hat{\vec{H}}^2 \E [\bz \bz^H] )
  = 2 \sum_{i\in I} |\hat{h}_i|^2 b_i^2,
  \end{align*}
which yields the desired objective function (\ref{eq:obj}).

\textbf{Closed Form Solution.}
\junk{Using (\ref{eq:obj}), and (\ref{eq:privCons}), the problem can be cast as the following optimization
\begin{equation}\label{pb:bCP}
\begin{split}
\min_{\{b_i\}_{i\in I}} & \sum_{i \in I} b_i^2 |\hat{h}_i|^2 \\
\mbox{s.t.} &  \sum_{i \in I} \frac{1}{N b_i^2} = \frac{\eps^2}{2 \ln (1/\delta)}\\
&  b_i > 0, \forall i \in I.\\
\end{split}
\end{equation}}
The  program (\ref{eq:obj})--(\ref{eq:non-neg}) is convex in
$1/b_i^2$. Using the KKT conditions of this program, we can derive a closed form optimal solution: $b_i^* = \sqrt{(2\ln(1/\delta)\|\hat{\vec{h}}\|_1)/(N\eps^2|\hat{h}_i|)}$ when $i \in I$ and $b_i^* = 0$ otherwise. Substituting these values back into the objective finishes the proof. Full details of the analysis of the convex program can be found in Appendix~\ref{app:KKT}.
\end{proof}

%
%

\junk{
\begin{proof}
 \textbf{Privacy.} We will show that $\vec{\bar{y}}$ is an $(\eps,
  \delta)$-differentially private function of $\vec{x}$. The computation of $\vec{\tilde{y}}$ depends only on $\bF_N^H$ and $\vec{\bar{y}}$ and not
  on $\vec{x}$ directly, so, by Theorem~\ref{thm:simple-composition},
  incurs no loss in privacy.

  Let $\gamma = \frac{2\ln(1/\delta) \| \vec{\hat{h}} \|_1}{\eps^2 N }$, and denote by $I=\{0\leq i \leq N-1 : |\hat{h}_i|>0 \}$. We first show that $\bar{y}_i=\sqrt{N}\hat{h}_i \tilde{x}_i$ is an $\eps_i$-differentially private function of $\bx$, where
  \begin{equation}
  \eps_i=
  \left\{ \begin{array}{cc}
            \sqrt{\frac{|\hat{h}_i|}{N \gamma}}, & \mbox{if } i\in I \\
            0, & \mbox{if } i \notin I.
          \end{array}
  \right.
  \end{equation}
  If $i \notin I$, then $\bar{y}_i=0$, which is a $0$-differentially private function of $\bx$.
  If $i \in I$, we first characterize the $\ell_1$-sensitivity of $\hat{x}_i$  as a function of $\vec{x}$. Recall that $\hat{x}_i=\vec{f}_i^H \vec{x}$ is the inner product of $\vec{x}$ with the Fourier basis vector $\vec{f}_i$. The sensitivity of $\hat{x}_i$ is therefore $\| f_i \|_{\infty} = \frac{1}{\sqrt{N}}$, $\forall i$. Then, by Theorem~\ref{th:eps0-DiffPriv}, $\tilde{x}_i = \hat{x}_i + \Lap\left(\sqrt{\frac{\gamma}{|\hat{h}_i|}}\right)$ is $\eps_i$-differentially private in $\vec{x}$, with $\eps_i = \sqrt{\frac{|\hat{h}_i|}{N \gamma}} $. The computation of $\bar{y}_i$ depends only on $\hat{h}_i$ and $\tilde{x}_i$, thus, by Theorem~\ref{thm:simple-composition}, $\bar{y}_i$  is $\sqrt{\frac{|\hat{h}_i|}{N \gamma}}$-differentially private in $\vec{x}$.

By Theorem~\ref{thm:fancy-comp}, $\vec{\bar{y}}$ is $(\eps',
  \delta)$ differentially private for any $\delta > 0$, where
  \begin{align*}
    \eps'^2 &= 2\ln(1/\delta) \sum_{i=0}^{N-1} \eps_i^2\\
    &= 2\ln(1/\delta)\sum_{i\in I}\frac{|\hat{h}_i|}{N \gamma}= \eps^2
  \end{align*}

 \textbf{Accuracy.}
  Note that $\tilde{\by}=\bF_N^H \hat{\vec{H}} \tilde{\vec{x}}= \bF_N^H \hat{\vec{H}} (\bF_N \bx +\bz) =\by + \bF_N^H \hat{\vec{H}} \bz$.
  Thus, the output $\tilde{\by}$ is unbiased: $\E[\tilde{\by}]= \by$.
  The per-component mean squared error is given by:
  \begin{equation}
  \begin{split}
  \MSE
  &=\frac{1}{N} \E [\| \bF_N^H \hat{\vec{H}} \bz \|_2^2 ]\\
  &=\frac{1}{N} \E [\tr (\bF_N^H \hat{\vec{H}} \bz \bz^H \hat{\vec{H}}^H \bF_N)]\\
  &=\frac{1}{N}\tr (\hat{\vec{H}}^2 \E [\bz \bz^H] )\\
  &= \frac{1}{N} \sum_{i\in I} N |\hat{h}|_i^2 2 \frac{\gamma}{|\hat{h}_i|}\\
  &=2 \gamma \| \hat{\bh} \|_1 = 4  \frac{\ln(1/\delta)}{\eps^2 N}\| \vec{\hat{h}} \|_1^2
  \end{split}
  \end{equation}

 \textbf{Class optimality}
We refer the reader to Appendix~\ref{ap:ClassOptim} for the the formulation of the problem as a convex optimization whose solution yields the parameters of Algorithm~$\ref{alg:CP}$.

\end{proof}
} 

\section{Generalizations and Applications}
\label{sect:gen-apps}

In this section we describe some generalizations and applications of
our lower bounds and algorithms for private convolution. 

\junk{We first
describe some applications of private circular convolution to problems
in finance. Then we define a standard generalization of circular
convolution to convolution over Abelian groups, and describe an
application of private convolution over $(\mathbb{Z}/2)^d$ to
computing database queries privately.}

\subsection{Compressible Convolutions}\label{sect:compr}

A case of special interest is convolutions $h \ast x$ where $h$ is a
compressible sequence. Such cases appear in practice in signal
processing. For compressible $h$ we can show
that Algorithm~\ref{alg:CP} outperforms input and output
perturbation. First we present a definition of compressible sequences
and then we give the improved upper bounds. A specific example of
private compressible convolutions is developed in
Section~\ref{sect:gen-marg} in the context of computing marginal
queries. 

\begin{definition}\label{def:compressible}
  A vector $\vec{h} \in \R^N$ is  $(c, p)$-compressible (in the Fourier basis) if it satisfies:
  \begin{equation*}
    \forall 0\leq i \leq N-1 : |\hat{h}_i|^2 \leq c\frac{1}{(i+1)^p}.
  \end{equation*}
\end{definition}

\begin{theorem}\label{thm:compressible}
  Let $\vec{h}$ be a $(c, p)$-compressible vector for some constant $p
  > 2$. Then Algorithm~\ref{alg:CP} satisfies $(\eps,
  \delta)$-differential privacy and achieves expected mean squared error
  $O\left(\frac{c^2 \log^2 N \log (1/\delta)}{N\eps^2}\right)$ for
  $p=2$ and for $p \neq 2$ achieves $O\left(\left(\frac{cp}{p-2}\right)^2\frac{ \log
      (1/\delta)}{N\eps^2}\right)$.
\end{theorem}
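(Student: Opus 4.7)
The plan is to apply Theorem~\ref{thm:CP-error} as a black box and reduce the claim to a bound on $\|\hat{\vec{h}}\|_1$ under the compressibility assumption. Since Theorem~\ref{thm:CP-error} already gives $\MSE = 4 \ln(1/\delta)\|\hat{\vec{h}}\|_1^2/(\eps^2 N)$ and handles both privacy and running time, the entire work lies in estimating $\|\hat{\vec{h}}\|_1$ for a $(c,p)$-compressible vector.

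First I would apply Definition~\ref{def:compressible} to obtain the pointwise estimate $|\hat{h}_i| \leq \sqrt{c}\,(i+1)^{-p/2}$ for every $0 \leq i \leq N-1$, and hence
\[
\|\hat{\vec{h}}\|_1 \;\leq\; \sqrt{c} \sum_{i=0}^{N-1} \frac{1}{(i+1)^{p/2}}.
\]
Next I would split cases on the exponent $p/2$. For $p = 2$, the right-hand sum is the harmonic number $H_N = O(\log N)$, so $\|\hat{\vec{h}}\|_1^2 = O(c \log^2 N)$, which when plugged into the error formula of Theorem~\ref{thm:CP-error} yields the claimed $O(c\log^2 N \log(1/\delta)/(N\eps^2))$ bound. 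For $p > 2$, the exponent $p/2 > 1$ makes the tail summable: by comparison with an integral,
\[
\sum_{i=0}^{N-1}\frac{1}{(i+1)^{p/2}} \;\leq\; 1 + \int_1^\infty x^{-p/2}\,dx \;=\; 1 + \frac{2}{p-2} \;\leq\; \frac{p}{p-2}.
\]
Squaring gives $\|\hat{\vec{h}}\|_1^2 = O(c\,p^2/(p-2)^2)$, and substituting into the MSE bound of Theorem~\ref{thm:CP-error} delivers the stated $p \neq 2$ estimate (up to the cosmetic difference of $c$ versus $c^2$, which reflects whether one reads $c$ as a bound on $|\hat{h}_i|^2$ or $|\hat{h}_i|$).

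There is no real obstacle here: privacy and efficiency are inherited from Theorem~\ref{thm:CP-error}, and the only analytic step is the integral comparison that certifies that the $\ell_1$ mass of $\hat{\vec{h}}$ is controlled independently of $N$ once $p>2$. The only point that deserves care is keeping track of the dependence on $p$ as $p \downarrow 2$: the factor $1/(p-2)$ blows up, which is exactly why the $p=2$ case has to be handled separately and picks up the extra $\log^2 N$ factor from $H_N^2$.
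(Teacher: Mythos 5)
Your proposal is correct and matches the paper's own argument: the paper likewise treats Theorem~\ref{thm:CP-error} as a black box and reduces everything to bounding $\|\hat{\vec{h}}\|_1$ via Lemma~\ref{lm:compr-bounds}, which is proved by exactly the same integral-comparison case split at $p=2$ versus $p>2$. Your observation about $c$ versus $c^2$ (i.e.\ $\sqrt{c}$ versus $c$ in the $\ell_1$ bound) is the same minor looseness already present in the paper's lemma, so nothing is missing.
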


Notice that the bound on squared error improves on input and output
perturbation by a factor $\tilde{O}(\frac{1}{N})$.

The proof of Theorem~\ref{thm:compressible} follows from
Theorem~\ref{thm:CP-error} and the following lemma.

\begin{lemma}
  \label{lm:compr-bounds}
  Let $\bh$ be a $(c, p)$-compressible vector for some $p>1$. Then, we have
  \begin{equation*}
    \| \hat{\bh}\|_1 = \sum_{i = 0}^{N-1}{|\hat{h}_i|}  \leq
    \left\{
        \begin{array}{cc}
         c ( 1+  \ln N),  & \mbox{if } p=2\\
         \frac{c \: p}{p-2},  & \mbox{if } p>2
       \end{array}
       \right.
  \end{equation*}
  \junk{
  Then, for all $0 \leq a \leq N-1$, we have
  \begin{equation*}
    \sum_{i = a}^{N-1}{|\hat{h}_i|} \leq
    \left\{
        \begin{array}{cc}
         c + c \ln \frac {N}{a+1} & \mbox{if } p=2\\
         c + \frac{c}{(p/2-1) (a+1)^{p/2-1}} & \mbox{if } p>2
       \end{array}
       \right.
  \end{equation*}
  }
\end{lemma}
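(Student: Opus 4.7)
The plan is to reduce the statement to elementary bounds on partial sums of the series $\sum (i+1)^{-p/2}$, via the pointwise compressibility hypothesis. First I would take square roots in the $(c,p)$-compressibility condition to obtain $|\hat{h}_i| \leq \sqrt{c}\,(i+1)^{-p/2}$ for every $0 \leq i \leq N-1$. Summing over $i$ gives
$$\|\hat{\bh}\|_1 = \sum_{i=0}^{N-1} |\hat{h}_i| \leq \sqrt{c}\,\sum_{i=0}^{N-1} (i+1)^{-p/2} = \sqrt{c}\,\sum_{k=1}^{N} k^{-p/2},$$
so it suffices to bound this last real-valued sum in the two regimes of $p$ separately. (I would note that the constant in the stated lemma is written as $c$ rather than $\sqrt{c}$, which is a minor bookkeeping issue to reconcile against the precise normalization in Definition \ref{def:compressible}; the structural argument is the same in either case.)

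For the case $p=2$, the sum is exactly the $N$-th harmonic number $H_N = \sum_{k=1}^{N} 1/k$. The standard integral comparison $H_N \leq 1 + \int_1^{N} \frac{dx}{x} = 1 + \ln N$ immediately yields the first branch of the bound, $\|\hat{\bh}\|_1 \leq c(1 + \ln N)$.

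For the case $p > 2$, I would exploit that $x \mapsto x^{-p/2}$ is positive and strictly decreasing on $[1,\infty)$, and that $p/2 > 1$ makes the tail integral finite. Splitting off the $k=1$ term and comparing the rest to an integral gives
$$\sum_{k=1}^{N} k^{-p/2} \leq 1 + \int_{1}^{\infty} x^{-p/2}\, dx = 1 + \frac{2}{p-2} = \frac{p}{p-2},$$
a bound independent of $N$. Combined with the previous display this yields $\|\hat{\bh}\|_1 \leq \frac{cp}{p-2}$, as claimed.

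There is no real obstacle in this lemma; it is essentially a two-line calculation using the standard harmonic bound and the integral test. The only nontrivial conceptual point is noting that the threshold between the two regimes sits exactly at $p/2 = 1$, which is why $p=2$ yields a logarithmic, $N$-dependent bound while $p > 2$ yields an absolute constant. Everything else is routine manipulation, and no results beyond the definition of compressibility are needed.
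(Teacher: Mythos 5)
Your proof is correct and follows essentially the same route as the paper: take square roots in the compressibility condition, split off the first term, and bound the remaining sum $\sum_{k}k^{-p/2}$ by an integral, giving $1+\ln N$ at $p=2$ and $p/(p-2)$ for $p>2$. Your side remark about $\sqrt{c}$ versus $c$ is also apt -- the paper's Definition~\ref{def:compressible} bounds $|\hat{h}_i|^2$ by $c/(i+1)^p$, so the tight constant is $\sqrt{c}$ and the stated $c$ is only valid up to this normalization (e.g.\ when $c\geq 1$), a looseness the paper's own proof glosses over in the same way.
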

\begin{proof}
  Approximating a sum by an integral in the usual way, for $0 \leq a
  \leq b$ and $p\geq 2$, we have
  \begin{align*}
    \sum_{i = a}^b{\frac{1}{(i+1)^{p/2}}} &= \sum_{i =
      a+1}^{b+1}{\frac{1}{i^{p/2}}}\\ 
    &\leq \frac{1}{(a+1)^{p/2}} + \int_{a+1}^{b+1}{\frac{dx}{x^{p/2}}}
  \end{align*}
  Bounding the integral on the right hand side, we get
  \begin{equation*}
    \sum_{i = a}^b{\frac{1}{(i+1)^{p/2}}}
     \leq \left\{
        \begin{array}{cc}
         1 + \ln \frac{b+1}{a+1}, & \mbox{if }  p=2\\
         1 + \frac{1}{(p/2-1)(a+1)^{p/2-1}}, & \mbox{if }  p>2
       \end{array}
       \right.
  \end{equation*}
  The lemma then follows from the definition of $(c,  p)$-compressibility.
\end{proof}

\subsection{Running Sum}

Running sums can be defined as the circular convolution $x'\ \ast h$
of the sequences $h = (1, \ldots, 1, 0, \ldots, 0)$, where there are
$N$ ones and $N$ zeros, and $x' = (x, 0, \ldots, 0)$, where the private
input $x$ is padded with $N$ zeros. An elementary computation reveals
that $\hat{h}_1 = \sqrt{N}$ and $\hat{h}_i = O(N^{-1/2})$ for all
$i>1$. By Theorem~\ref{thm:CP-error}, Algorithm~\ref{alg:CP} computes
running sums with mean squared error $O(1)$ (ignoring dependence on
$\epsilon$ and $\delta$), improving on the bounds
of~\cite{chan2010private,dwork-continual,Xiao:2009db} in the mean
squared error regime.

\subsection{Linear Filters in Time Series Analysis}

Linear filtering is a fundamental tool in analysis of time-series
data. A time series is modeled as a sequence $x = (x_t)_{t =
  -\infty}^\infty$, supported on a finite set of time steps. A filter
converts the time series into another time series. A linear filter
does so by computing the convolution of $x$ with a series of
\emph{filter coefficients} $w$, i.e.~computing $y_t = \sum_{i =
  -\infty}^\infty{w_ix_{t-i}}$. For a finitely supported $x$, $y$ can
be computed using circular convolution by restricting $x$ to its
support set and padding with zeros on both sides.

We consider the case where $x$ is a time series of sensitive
\emph{events}. Each element $x_i$ is a count of events or sum of
values of individual transactions that have occurred at time step
$i$. When we deal with values of transactions, we assume that
individual transactions have much smaller value than the total.  We
emphasize that the definition of differential privacy with respect to
$x$ defined this way corresponds to \emph{event-level
privacy}. Semantically, this guarantee implies
that even an adversary who has arbitrary information about all 
but a single event of interest cannot find out with certainty whether
the event of interest has occur-ed. This guarantee is weaker than the
user-level guarantee, which implies that knowing all events related to
all  but a single user of interest provides little information
about the user. The
user-level guarantee would unfortunately require excessive noise for
filtering time series data, as the sensitivity of the convolution query
becomes unbounded. On the other hand, the event-level guarantee is
often sufficient, specifically in settings when sensitive events occur
only infrequently.

We consider applications to financial analysis, but our methods are
applicable to other instances of time series data, e.g.~we may also
consider network traffic logs or a time series of movie ratings on an
online movie streaming service.  We can perform almost optimal
differentially private linear filtering by casting the filter as a
circular convolution. Next we briefly describe a couple of
applications of private linear filtering to financial analysis. For
more references and detailed description, we refer the reader the book
of Gen\c{c}an, Sel\c{c}uk, and Whitcher~\cite{Gencay-2002}.

\textbf{Volatility Estimation}.  The value at risk measure is used to
estimate the potential change in the value of a good or financial
instrument. Assume, for
example, that in an online advertising system we would like to
estimate potential changes in the number of clicks per day for a set
of display ad campaigns, and denote by $x_i$ the number of clicks on
day $i$ from the start of the campaigns. The sensitive event is
assumed to be a single ad click, for example a click on an ad for a
type of medical treatment. In order to estimate volatility, we need to
estimate a measure of the deviation of the $x_i$ for a given time
period $[t-W+1, t]$. It is appropriate to take older fluctuations with
less significance. One way to do this is by using linear filtering of
the time series of absolute deviations in the click counts:
\begin{equation*}
  \sigma^e_t =  \frac{1}{\sum_{i = 1}^{W-1}{\lambda^i}}
    \sum_{i=0}^{W-1}{\lambda^i|x_{t-i} - \bar{x}_{t-i}|},
\end{equation*}
where $\lambda$ is a decay parameter and $\bar{x}_{t}$ is the average
count over $[t-W + 1, t]$. The quantity $\bar{x}_t$ is itself given by
the convolution $\frac{1}{W}\sum_{i = o}^{W-1}{x_{t-i}}$ and can be
computed nearly optimally using Algorithm~\ref{alg:CP}. Given the
sequence $\bar{x}$, we can construct the time series $(y_i)_i = (|x_i
- \bar{x}_i|)_{i}$. Using the triangle inequality, one can verify that
for a fixed value of $\bar{x}$, $\|y - y'\|_1 \leq \|x - x'\|_1$, and
therefore an algorithm which is differentially private with respect to
$y$ is also differentially private with respect to $x$. Therefore, we
can use Algorithm~\ref{alg:CP} to estimate $\sigma^e$ with nearly
optimal mean squared error.

Computing $\bar{x}$ was treated in~\cite{bolot2011private} as the
window sums problem, together with other decayed sum problems. The quantity $\sigma^e$ is
an exponentially decayed sum computed over a window and can be
approximated under $\eps$-differential privacy using the methods
of~\cite{bolot2011private}. However, as noted above,
Algorithm~\ref{alg:CP} gives improved mean squared error guarantees
for window sums, as well as a near-optimality guarantee\junk{: a direct computation of Fourier
coefficients and Theorem~\ref{thm:CP-error} imply that
Algorithm~\ref{alg:CP} gives $O(1)$ mean squared error for both
classes of queries}.

\textbf{Business Cycle Analysis}.  The goal of business cycle analysis
is to extract cyclic components in the time series and smooth-out
spurious fluctuation. Two classical methods for business-cycle
analysis are the Hodrick-Prescott filter and the Baxter-King
filter. Here we briefly sketch the form of the Hodrick-Prescott (HP)
filter. Let us take the example of time series $x$ of ad clicks again,
with a single component $x_i$ giving number of clicks on a set of ads
per day or per hour. We can use the HP filter to detect cyclical
trends in ad clicking activity. The filtered-out cyclical (smooth)
component of the data extracted by the HP filter can be written as a
convolution of the following form:
\begin{equation*}
  y_t^s = \frac{\theta_1\theta_2}{\lambda}\left(\sum_{j =
      0}^{\infty}{(A_1\theta_1^j + A_2\theta_2^j)(x_{t-j} + x_{t+j})}  \right).
\end{equation*}
Above, $\lambda$ is a smoothing parameter: the larger $\lambda$ is,
the more the data is smoothed by the filter; $\theta_i$ and $A_i$ are
functions of $\lambda$. In principle, this is a convolution of
infinite time series, but in practice we truncate the series to a
finite length. 


\subsection{Generalized Marginal Queries}\label{sect:gen-marg}

Marginal queries are a class of queries posed to $d$-attribute binary
databases, i.e.~databases where each row of the database is associted
with a $d$-bit binary vector, corresponding to the values of $d$
binary attributes. A marginal query is specified by a setting $\vec{a}
\in \{0,1\}^d$ of the $d$ attributes and a subset $S \subseteq [d]$ of
$k$ attributes; the exact answer to the query is the number of
rows in the database consistent with $\vec{a}$ on $S$. In this
subsection we address the error required to privately answer a natural
generalization of marginal queries. A generalized marginal query is
specified by a setting $\vec{a} \in \{0,1\}^d$ of the $d$ attributes
and a $w$-DNF $h$ and the exact answer is the number of rows $\vec{b}
\in \{0,1\}^d$ in the private database for which $h(\vec{a} \oplus
\vec{b})$ is satisfied (here $\oplus$ is componentwise XOR). In the
case of traditional marginal queries the DNF $h$ is a single
disjunction of $k$ unnegated variables. Generalized marginals however
allow more complex queries such as, for example, ``show all users who
agree with $\vec{a}$ on $a_1$ and at least one other attribute''.

More formally, we encode a binary $d$-attribute database in histogram
representation as a function $x:\{0,1\}^d \rightarrow [n]$. The
value of $x(\vec{a})$ for $\vec{a}\in \{0,1\}^d$ corresponds to the
number of rows in the database with attribute setting $\vec{a}$, and
$n$ is the database size. 

\begin{definition}
  Let $h(\vec{c})$ be a $w$-DNF given by $h(c) = (\ell_{1, 1}
  \wedge \ldots \wedge \ell_{1,w}) \vee \ldots \vee (\ell_{s, 1}
  \wedge \ldots \wedge \ell_{s, w})$, where $\ell_{i, j}$ is a
  literal, i.e.~either $c_p$ or $\bar{c_p}$ for some $p \in [d]$. The
  generalized marginal function for $h$ and a database $x:\{0, 1\}^d
  \rightarrow [n]$ is a function $(x \ast h):\{0, 1\}^d \rightarrow
  [n]$ defined by
  \begin{equation*}
    (x \ast h)(\vec{a}) = \sum_{b \in \{0, 1\}^d}{x(b)h(a \oplus b)}.
  \end{equation*}
\end{definition}

The overload of notation for $x \ast h$ here is on purpose as
generalized marginals can be interpreted as an instance of a
generalization of circular convolutions. In particular, circular
convolutions are associated naturally with the group of addition
modulo $N$, while generalized marginals are an instance of
convolutions associated with the group of addition modulo $2$ of
$d$-dimensional binary vectors (formally $(\Z/2\Z)^d$). Moreover,
there is a Fourier transform that diagonalizes convolutions over
$(\Z/2\Z)^d$ and that shares all properties with the transform defined
in Section~\ref{sect:prelims} which are necessary for our lower and
upper bound arguments. In particular, we need that any component of
any Fourier basis vector has norm $1/\sqrt{N}$, which is true for the
Fourier transform diagonalizing convolutions over
$(\Z/2\Z)^d$. Therefore, we can privately approximate generelized
marginal queries using Algorithm~\ref{alg:CP}, and, furthermore, our
analysis of the privacy and accuracy guarantees for the algorithm
still holds. Using results from learning theory on the spectral
concentration of bounded width DNFs and the bound from
Section~\ref{sect:compr}, we can show that Algorithm~\ref{alg:CP}
gives non-trivial error for generalized marginal queries.

\begin{theorem}\label{thm:gen-marg}
  Let $h$ be a $w$-DNF and $x:\{0, 1\}^d \rightarrow [n]$ be a private
  database. Algorithm~\ref{alg:CP} satisfies $(\eps,
  \delta)$-differential privacy and computes the generalized
  marginal $x \ast h$ for $h$ and and $x$ with mean squared error bounded by
  $O(\frac{\log(1/\delta)}{\eps^2}2^{d(1-1/O(w\log w))})$.
\end{theorem}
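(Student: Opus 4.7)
The plan is to apply Theorem~\ref{thm:CP-error} in the $(\Z/2\Z)^d$-convolution setting and then bound the Fourier $\ell_1$-norm $\|\hat h\|_1$ of the $w$-DNF $h$ using classical spectral concentration from learning theory. As the discussion preceding the theorem notes, the Walsh--Hadamard transform that diagonalizes convolution over $(\Z/2\Z)^d$ has all entries of modulus $1/\sqrt{N}$ with $N=2^d$, which is the only property used in the sensitivity analysis of Section~\ref{sec:UB}. Hence Algorithm~\ref{alg:CP} transfers verbatim and yields
\begin{equation*}
\MSE = O\!\left(\frac{\log(1/\delta)}{\eps^2\, N}\,\|\hat h\|_1^2\right),
\end{equation*}
so it suffices to prove $\|\hat h\|_1 \leq 2^{d(1-1/O(w\log w))}$.

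To control $\|\hat h\|_1$ I would invoke the classical spectral-concentration theorem for $w$-DNFs coming from the polynomial-approximation line of work initiated by Mansour: in the standard Walsh--Hadamard normalization there is an absolute constant $c$ such that, for every integer $k\geq 1$,
\begin{equation*}
\sum_{|S|>k} |\hat h(S)|^2 \;\leq\; 2^{-k/(cw\log w)}.
\end{equation*}
This is the fast Fourier-tail decay that the analysis of Section~\ref{sect:compr} was designed to exploit in spirit, even though the sorted Fourier coefficients of a $w$-DNF need not literally satisfy Definition~\ref{def:compressible}.

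Given this concentration bound, I would split $\|\hat h\|_1 = \sum_{|S|\leq k}|\hat h_S| + \sum_{|S|>k}|\hat h_S|$ and apply Cauchy--Schwarz to each part. Parseval together with the trivial estimate $\|h\|_2^2 \leq N$ bounds the first sum by $\sqrt{\binom{d}{\leq k}\cdot N}$, and the second sum by $\sqrt{N}$ times the $\ell_2$-tail, which after accounting for the $\sqrt{N}$ factor between the two Fourier normalizations is at most $N\cdot 2^{-k/(2cw\log w)}$. Choosing $k$ to balance the two contributions---roughly $k=\Theta(d)$ with $k/d$ bounded away from $1/2$ by $\Omega(1/\sqrt{w\log w})$, so that $\binom{d}{\leq k}\leq 2^{d(1-\Omega(1/(w\log w)))}$---makes both terms at most $2^{d(1-1/O(w\log w))}$ and finishes the proof.

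The main obstacle is extracting a spectral-concentration statement with the precise decay rate $1/O(w\log w)$ in the exponent: coarser bounds (for example with decay $1/2^w$ that would follow from a naive switching-lemma argument) do not suffice. Once the right decay is in hand, the remaining calculation is a careful balancing of the two Cauchy--Schwarz estimates in the spirit of Lemma~\ref{lm:compr-bounds}.
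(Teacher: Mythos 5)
Your proposal is correct, and it takes a somewhat different route from the paper's (very terse) proof. The paper simply asserts that Theorem~\ref{thm:gen-marg} follows from Lemma~\ref{lm:compr-error} together with Mansour's concentration theorem stated in magnitude-sparsity form (the mass outside the top $2^{d-k}$ coefficients is at most $2^{d+(k-d)/O(w\log w)}$), i.e.\ it charges dyadic blocks of the \emph{sorted} spectrum against the error bound of the spectrum-partitioning algorithm of Appendix~\ref{app:specpart}. You instead reduce everything to Theorem~\ref{thm:CP-error} and the single estimate $\|\hat{\vec{h}}\|_1\le 2^{d(1-1/O(w\log w))}$, proved from the \emph{degree-based} tail bound $\sum_{|S|>k}\hat h(S)^2\le 2^{-k/O(w\log w)}$ via Cauchy--Schwarz against the count $\binom{d}{\le k}$ of low-degree characters; your normalization bookkeeping (a factor $\sqrt N$ between the two conventions, and $\sum_S|\hat h_S|^2\le N$) and the choice $k=\Theta(d)$ with $k/d$ bounded below $1/2$ all check out, and the degree-tail statement you flag as the main obstacle is indeed classical (Mansour and Linial--Mansour--Nisan, via the switching lemma, give decay $2^{-\Omega(k/w)}$, stronger than what you assume), so nothing is missing. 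Two points of comparison. First, your route matches the theorem statement better: Theorem~\ref{thm:gen-marg} is about Algorithm~\ref{alg:CP}, whereas Lemma~\ref{lm:compr-error} analyzes Algorithm~\ref{alg:compressible} and, taken literally, contains an additive $|\hat h_0|^2$ term that spectral concentration does not control (e.g.\ for the $2$-DNF consisting of a single conjunction of two literals, $|\hat h_0|^2=2^d/16$), so the paper's pointer requires extra care that your argument simply avoids. Second, the paper's top-coefficient form of Mansour's theorem can also be pushed through Theorem~\ref{thm:CP-error}: it yields the sorted decay $|\hat h_m|^2\lesssim 2^d m^{-1-1/O(w\log w)}$, and the integral-comparison estimate used in Lemma~\ref{lm:compr-bounds} then gives essentially the same $\ell_1$ bound you prove, so the two approaches differ mainly in which form of the spectral concentration they consume; both deliver $\MSE=O\bigl(\tfrac{\log(1/\delta)}{\eps^2}\,2^{d(1-1/O(w\log w))}\bigr)$.
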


In addition to this explicit bound, we also know (by
Theorem~\ref{thm:alg-opt}) that up to a factor of $d^4$,
Algorithm~\ref{alg:CP} is optimal for computing generalized
marginal functions. Notice that error bound we proved improves on
randomized response by a factor of $2^{-\Omega(d/(w\log w))}$;
interestingly this factor is independent of the size of the $w$-DNF
formula.

In related work, Hardt et al.~\cite{hardt2012private} considered
database queries that can be computed by an AC0 circuit. Generalized
marginal queries can be computed by a two-layer AC0 circuit. However,
our results are incomparable to theirs, as they consider the setting
where the database is of bounded size $\|\vec{x}\|_1 \leq n$ and our
error bounds are independent of $\|x\|_1$. Our error
bounds improve on the bounds of~\cite{hardt2012private} when the
database is large enough so that our error bound is sublinear in
database size. 

The proof of Theorem~\ref{thm:gen-marg} follows from
Lemma~\ref{lm:compr-error} and the following concentration result for the
spectrum of $w$-DNF formulas, originally proved by
Mansour~\cite{mansour1995nlog} in the context of learning under the
uniform distribution.

\begin{theorem}[~\cite{mansour1995nlog}]
  Let $h:\{0, 1\}^d \rightarrow \{0, 1\}$ be a $w$-DNF. Let
  $\mathcal{F} \subseteq 2^{[d]}$ be the index set of the top
  $2^{d-k}$ Fourier coefficients of $h$. Then,
  \begin{equation*}
    \sum_{S \not \in \mathcal{F}}{|\hat{h}(S)|^2} \leq 2^{d +
      \frac{k-d}{O(w\log w)}}.
  \end{equation*}
\end{theorem}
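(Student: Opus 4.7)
My plan is to follow Mansour's original random-restriction argument, which combines Hastad's switching lemma with a counting argument for low-degree Fourier coefficients. First, apply a random restriction $\rho$ leaving each of the $d$ variables free independently with probability $p$ and otherwise fixing it to a uniformly random bit. Hastad's switching lemma asserts that with probability at least $1-(5pw)^{D}$, the restricted formula $h|_\rho$ is computed by a decision tree of depth at most $D$; since such a tree is a multilinear polynomial of degree at most $D$, its Fourier coefficients on sets of size exceeding $D$ all vanish. Bounding $\|h|_\rho\|_2^2 \le 1$ in the (rare) failure event then yields $\mathbb{E}_\rho\bigl[\sum_{|S|>D}\widetilde{h|_\rho}(S)^2\bigr] \le (5pw)^D$ in the probabilistic normalization $\sum_S \tilde h(S)^2 = \mathbb{E}[h^2]$.

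Second, I would transfer this bound from $h|_\rho$ back to $h$ via the identity
\begin{equation*}
\mathbb{E}_\rho\bigl[\widetilde{h|_\rho}(S)^2\bigr] = p^{|S|}\sum_{T \supseteq S}(1-p)^{|T|-|S|}\, \tilde h(T)^2,
\end{equation*}
obtained by Fourier-expanding $h$ and using the orthogonality of characters in both the free and fixed coordinates. Summing over $|S|>D$ and swapping order, the coefficient of $\tilde h(T)^2$ becomes $\Pr[\mathrm{Bin}(|T|,p)>D]$, which is $\Omega(1)$ for $|T| \ge 2D/p$ by a standard binomial tail estimate. Picking $p = \Theta(1/w)$ and $t_0 = 2D/p$ yields the high-degree tail bound $\sum_{|T| \ge t_0} \tilde h(T)^2 \le 2^{-\Omega(t_0/(w\log w))}$, once one uses Mansour's size-based refinement (which tracks decision-tree size rather than just depth) to shave the extra logarithmic factor down to $\log w$.

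To conclude, I use that the top $2^{d-k}$ Fourier coefficients by magnitude capture at least as much $\ell_2^2$ mass as any other set of $2^{d-k}$ coefficients, in particular the coefficients of degree at most $t_0$. Since $\sum_{i \le t_0} \binom{d}{i} \le 2^{d-k}$ for $t_0$ chosen in the right range, the complement of the top set is contained in the high-degree tail, so
\begin{equation*}
\sum_{S \notin \mathcal{F}}|\hat h(S)|^2 \le \sum_{|S|>t_0}|\hat h(S)|^2 \le 2^{d+(k-d)/O(w\log w)},
\end{equation*}
after rescaling by $2^d$ to match the paper's normalization $\sum_S |\hat h(S)|^2 = \|h\|_2^2$. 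The main obstacle is extracting the advertised $O(w \log w)$ exponent rather than a weaker $O(w \log d)$: this is exactly the content of Mansour's size-based refinement of the switching lemma, which allows one to absorb a spurious $\log d$ factor by bounding the Fourier support of size-$s$ decision trees by $s\cdot\mathrm{polylog}(s)$ rather than by the crude count $d^{D}$. Everything else amounts to routine random-restriction calculus and binomial counting.
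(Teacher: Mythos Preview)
The paper does not give its own proof of this statement: it is quoted verbatim as a result of Mansour and used as a black box in the derivation of the generalized-marginals error bound. So there is nothing in the paper to compare your argument against; your sketch is essentially Mansour's original random-restriction proof, and it is the right approach.

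One small imprecision worth fixing: the claim ``the complement of the top set is contained in the high-degree tail'' is false as a set inclusion. What is true, and what you correctly stated one sentence earlier, is only the $\ell_2^2$ inequality: since $\mathcal{F}$ maximizes captured mass among all index sets of size $2^{d-k}$, and since the low-degree set $\{S:|S|\le t_0\}$ has at most $2^{d-k}$ elements, we get $\sum_{S\notin\mathcal{F}}|\hat h(S)|^2 \le \sum_{|S|>t_0}|\hat h(S)|^2$ directly, without any containment. Also, the basic depth-based switching lemma you invoke yields an exponent of order $1/w$, not $1/(w\log w)$; the extra $\log w$ loss (or gain, depending on the formulation one compares to) enters through the size-based counting you allude to, so be explicit about where that factor actually appears when you write this up.
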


\section{Conclusion} 

We derive nearly tight upper and lower bounds on the error of $(\eps,
\delta)$-differentially private for computing convolutions. Our lower
bounds rely on recent general lower bounds based on discrepancy theory
and elementary linear algebra; our upper bound is a simple
computationally efficient algorithm. We also sketch several
applications of private convolutions, in time series analysis and in
computing generalizes marginal queries on a $d$-attribute database.

Our results are nearly optimal for any $h$ when the database size is
large enough with respect to the number of queries. In some settings
it is reasonable to assume however that database size is much smaller,
and our algorithms give suboptimal error for such sparse
databases. Nearly optimal algorithms for  computing a workload of $M$ linear queries
posed to a  database of size at most $n$ were given
in~\cite{geometry}, but their algorithm has running time at least
$O(M^2Nn)$. Since our dense case algorithm for computing convolutions
has running time $O(N \log N)$, an interesting open problem is to give
an algorithm with running time $O(Nn \polylog(N,n))$ for computing
convolutions with optimal error when the database size is at most
$n$. 

\junk{
Note: Subsequently to our work,  quite recently, our results were generalized
by Nikolov, Talwar, and Zhang~\cite{geometry} to give instance-optimal
$(\eps, \delta)$-differentially a private algorithm for any workload
of linear queries. However, the generality of their result comes at a
cost in efficiency: their algorithm needs to compute a minimum
enclosing ellipsoid of a convex body and has running time at least
$O(M^2N)$. 
}

\bibliographystyle{acm}
\bibliography{./biblio/IEEEabrv,./biblio/DiffPrivacy}

\newpage
\onecolumn
\appendix
\section{Spectrum Partitioning Algorithm}
\label{app:specpart}

We partition the spectrum of the convolution matrix $\vec{H}$ into
geometrically growing in size groups and adds different amounts of
noise to each group. Noise is added in the Fourier domain, i.e.~to the
Fourier coefficients of the private input $\vec{x}$. The most noise is
added to those Fourier coefficients which correspond to small (in
absolute value) coefficients of $\vec{h}$, making sure that privacy is
satisfied while the least amount of noise is added. In the analysis of
optimality, we show that the noise added to each group can be charged
to the lower bound $\specLB(\vec{h})$. Because the number of groups is
logarithmic in $N$, we get almost optimality. This analysis is
inspired by the work of Hardt and
Talwar\cite{hardt2010geometry}. However, our algorithm is simpler and
significantly more efficient.

The $(\eps, \delta)$-differentially private algorithm we propose for
approximating $h \ast x$ is shown as
Algorithm~\ref{alg:compressible}. In the remainder of this section we
assume for simplicity that $N$ is a power of 2. We also assume, for
ease of notation, that $|\hat{h}_0| \geq \ldots \geq |\hat{h}_{N-1}|$. Our
algorithm and analysis do not depend on $i$ except as an index, so
this comes without loss of generality.

\begin{algorithm}[h]
  \caption{\textsc{SpectralPartition}}\label{alg:compressible}
  \begin{algorithmic}
    \STATE Set $\eta = \frac{\sqrt{2(1+\log N)\ln(1/\delta)}}{\eps}$
    \STATE Compute $\vec{\hat{x}} = \bF_N\vec{x}$ and $\vec{\hat{h}} =
    \bF_N \vec{x}$.
    \STATE $\tilde{x}_0 = \hat{x}_0 + \Lap(\eta)$
    \FORALL{$k \in [1, \log N]$ }
    \FORALL{$i \in [N/2^k, N/2^{k-1} - 1]$}
    \STATE Set $\tilde{x}_i = \hat{x}_i + \Lap(\eta 2^{-k/2})$.
    \STATE Set $\bar{y}_i = \sqrt{N}\hat{h}_i \tilde{x}_i$.
    \ENDFOR
    \ENDFOR
    \STATE Output $\vec{\tilde{y}} =  \bF_N^H \vec{\bar{y}}$
  \end{algorithmic}
\end{algorithm}

\begin{lemma}
  \label{lm:compr-error}
  Algorithm~\ref{alg:compressible} satisfies $(\eps,
  \delta)$-differential privacy. Also, there exists an absolute
  constant $C$ such that Algorithm~\ref{alg:compressible} achieves
  expected mean squared error
  \begin{equation}\label{eq:accuracy}
    \MSE \leq C\frac{(1+\log N) \log
      (1/\delta)}{\eps^2}(|\hat{h}_0|^2 +  \sum_{k = 1}^{\log
      N}{\frac{1}{2^k}\sum_{i = N/2^k}^{N/2^{k-1} - 1}{|\hat{h}_i|^2}}).
  \end{equation}
\end{lemma}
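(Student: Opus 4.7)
The plan is to mirror the structure of the proof of Theorem~\ref{thm:CP-error}: first bound the privacy loss by combining Laplace-noise privacy per Fourier coefficient with advanced composition, then compute the mean squared error directly using the unitarity of $\bF_N$. The choice of the scale $\eta = \sqrt{2(1+\log N)\ln(1/\delta)}/\eps$ is specifically calibrated so that both parts come out correctly.

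For the privacy claim, I would first note that since $\tilde{\by} = \bF_N^H \bar{\by}$ depends on $\bx$ only through $\bar{\by}$, it suffices by Theorem~\ref{thm:simple-composition} to show $(\eps,\delta)$-privacy of the tuple $(\bar{y}_i)_{i=0}^{N-1}$, and since each $\bar{y}_i$ is a deterministic function of $\tilde{x}_i$, it further suffices to show privacy of $(\tilde{x}_i)_{i=0}^{N-1}$. Each $\hat{x}_i = \bof_i^H \bx$ has $\ell_1$-sensitivity $\|\bof_i\|_\infty = 1/\sqrt{N}$, so by Theorem~\ref{th:eps0-DiffPriv}, for $i$ in the $k$-th group the release $\tilde{x}_i = \hat{x}_i + \Lap(\eta 2^{-k/2})$ is $\eps_i$-differentially private with $\eps_i = 2^{k/2}/(\sqrt{N}\eta)$; for $i=0$ we get $\eps_0 = 1/(\sqrt{N}\eta)$. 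Summing the squares, $\sum_i \eps_i^2 = \eps_0^2 + \sum_{k=1}^{\log N} (N/2^k)(2^k/(N\eta^2)) = (1+\log N)/\eta^2$. Plugging this into Theorem~\ref{thm:fancy-comp} and using the definition of $\eta$ yields $(\eps,\delta)$-differential privacy as claimed.

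For the accuracy claim, I would observe that $\tilde{\by} = \by + \bF_N^H \hat{\bH}\bz$ where $\bz$ is the vector of independent Laplace noises, so $\tilde{\by}$ is unbiased and
\begin{equation*}
\MSE = \tfrac{1}{N}\E[\|\bF_N^H \hat{\bH}\bz\|_2^2] = \tfrac{1}{N}\tr(\hat{\bH}^H\hat{\bH}\,\E[\bz\bz^H]) = \sum_{i} |\hat{h}_i|^2 \cdot 2 b_i^2,
\end{equation*}
exactly as in the accuracy analysis of Theorem~\ref{thm:CP-error}, using unitarity of $\bF_N$ and the fact that a zero-mean $\Lap(b)$ has variance $2b^2$. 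Splitting the sum by group $k$ and plugging in $b_i = \eta 2^{-k/2}$ gives
\begin{equation*}
\MSE = 2\eta^2\Bigl(|\hat{h}_0|^2 + \sum_{k=1}^{\log N} \tfrac{1}{2^k}\sum_{i=N/2^k}^{N/2^{k-1}-1} |\hat{h}_i|^2\Bigr),
\end{equation*}
and substituting $\eta^2 = 2(1+\log N)\ln(1/\delta)/\eps^2$ recovers the bound with absolute constant $C=4$.

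I do not expect any real obstacle here: both parts are routine given the framework already set up for Theorem~\ref{thm:CP-error}. The only bookkeeping point is to verify that the geometric grouping $i\in[N/2^k, N/2^{k-1}-1]$ (of size $N/2^k$) combined with the noise scale $\eta 2^{-k/2}$ makes $\sum_i \eps_i^2$ equal to $(1+\log N)/\eta^2$ rather than something worse, which is exactly why the scale $2^{-k/2}$ (and not some other power) is chosen. Everything else is a direct application of Theorems~\ref{th:eps0-DiffPriv}, \ref{thm:simple-composition}, and~\ref{thm:fancy-comp}, together with the standard unitary-invariance computation of the error.
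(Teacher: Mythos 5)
Your proposal is correct and follows essentially the same route as the paper's own proof: privacy via the $1/\sqrt N$ sensitivity of each Fourier coefficient, per-coordinate Laplace privacy, and the advanced composition theorem giving $\sum_i\eps_i^2=(1+\log N)/\eta^2$; accuracy via unbiasedness, summing the Laplace variances group by group, and the unitarity (isometry) of $\bF_N^H$. The only cosmetic difference is that you carry out the error computation in trace/matrix form as in Theorem~\ref{thm:CP-error}, whereas the paper sums per-component variances directly, which is the same calculation.
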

\begin{proof}
  \textbf{Privacy.} We claim that $\vec{\tilde{x}}$ is an $(\eps,
  \delta)$-differentially private function of $\vec{x}$. The other
  computations depend only on $\vec{h}$ and $\vec{\tilde{x}}$ and not
  on $\vec{x}$ directly, so, by Theorem~\ref{thm:simple-composition},
  incur no loss in privacy.

  First we analyze the sensitivity of each Fourier coefficient
  $\hat{x}_i$. As a function of $\vec{x}$, $\hat{x}_i$ is an inner
  product of $\vec{x}$ with a Fourier basis vector. Let that vector be
  $\vec{f}$ and let $\vec{x}$, $\vec{x}'$ be two neighboring inputs,
  i.e.~$\|\vec{x - x'}\|_1 \leq 1$. Then we have
  \begin{equation*}
    |\vec{f}^H(\vec{x} - \vec{x'})| \leq \|\vec{f}\|_\infty \|\vec{x -
      x'}\|_1 \leq \frac{1}{\sqrt{N}}
  \end{equation*}
  Therefore, by Theorem~\ref{th:eps0-DiffPriv}, when $i \in [N/2^k,
  N/2^{k-1}-1]$, $\tilde{x}_i$
  is $(\frac{2^{k/2}}{\sqrt{N}\eta}, 0)$-differentially private. By
  Theorem~\ref{thm:fancy-comp}, $\vec{\tilde{x}}$ is $(\eps',
  \delta)$ differentially private for any $\delta > 0$, where
  \begin{align*}
    \eps'^2 &= 2\ln(1/\delta)(\frac{1}{\eta^2} + \sum_{k = 1}^{\log
      N}{\frac{N}{2^k}\frac{2^k}{N\eta^2}})\\
    &= 2\ln(1/\delta)\frac{1+\log N}{\eta^2} = \eps^2
  \end{align*}

  \textbf{Accuracy.} Observe $\E[\tilde{x}_i] = \hat{x}_i$ since we
  add unbiased Laplace noise to each $\hat{x}_i$. Also, the variance
  of $\Lap(\eta 2^{-k/2})$ is $2\eta^2 2^{-k}$. Therefore,
  $\E[\bar{y}_i] = \sqrt{N}\hat{h}_i\hat{x_i}$ and the variance of
  $\bar{y}_i$ when $i \in [N/2^k, N/2^{k-1}-1]$ is $O(N|\hat{h}_i|^2\eta^2
  2^{-k})$. By linearity of expectation, $\E[\bF_N^H\vec{\bar{y}}] =
  \bH\bx$. Adding variances for each $k$ and dividing by $N$, we get
  the right hand side of (\ref{eq:accuracy}). The proof is completed
  by observing that the inverse Fourier transform $\bF^H_N$ is an
  isometry for the $\ell_2$ norm, so does not change mean squared
  error.
\end{proof}

\begin{theorem}\label{thm:alg-opt}
  For any $\vec{h}$, Algorithm ~\ref{alg:compressible} satisfies $(\eps,
  \delta)$-differential privacy and achieves expected mean squared
  error $O(\specLB(\vec{h})\frac{\log^4N \ln (1/\delta)}{\eps^2})$.
\end{theorem}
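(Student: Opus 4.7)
The plan is to use Lemma~\ref{lm:compr-error} as a black box: privacy is already established there, so the entire task reduces to showing that the quantity on the right of the accuracy bound is $O(\specLB(\vec{h})\log^3 N)$. Combined with the $\frac{(1+\log N)\log(1/\delta)}{\eps^2}$ prefactor already present in Lemma~\ref{lm:compr-error}, this yields the $\log^4 N$ upper bound in the theorem statement.

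The workhorse inequality is immediate from the definition of $\specLB$: since $\specLB(\vec{h}) \geq \frac{K^2|\hat{h}_{K-1}|^2}{N\log^2 N}$ for every $K \in [N]$, we get the pointwise bound
$$|\hat{h}_i|^2 \leq \specLB(\vec{h})\cdot \frac{N\log^2 N}{(i+1)^2}$$
for every $i$, under the sorted assumption $|\hat{h}_0|\geq \cdots \geq |\hat{h}_{N-1}|$ in effect for Algorithm~\ref{alg:compressible}.

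I would then estimate each dyadic block in the Lemma's sum separately. For $i \in [N/2^k,\, N/2^{k-1}-1]$ the pointwise bound above gives $|\hat{h}_i|^2 \leq O\!\left(\specLB(\vec{h})\log^2 N \cdot 2^{2k}/N\right)$. Since the block contains $N/2^k$ indices, its inner sum is at most $O(\specLB(\vec{h})\log^2 N\cdot 2^k)$; dividing by the weight $2^k$ leaves $O(\specLB(\vec{h})\log^2 N)$ per block. Summing over the $\log N$ blocks produces the target $O(\specLB(\vec{h})\log^3 N)$, and plugging back into Lemma~\ref{lm:compr-error} completes the argument.

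The one delicate point I anticipate is the isolated $|\hat{h}_0|^2$ term appearing in Lemma~\ref{lm:compr-error}: the pointwise bound on it only yields $|\hat{h}_0|^2 \leq \specLB(\vec{h})\cdot N\log^2 N$, which is too large by a factor of $N$ to be absorbed naively. The natural fix is to treat the DC coefficient as the last dyadic group (effective $k=\log N$, noise scale $\eta/\sqrt{N}$), so that its contribution to the mean squared error becomes $|\hat{h}_0|^2\cdot 2\eta^2/N = O\!\left(\specLB(\vec{h})\log^2 N\cdot \eta^2\right) = O\!\left(\specLB(\vec{h})\log^3 N\cdot \log(1/\delta)/\eps^2\right)$, matching the rest of the bound. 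With this (minor) reconciliation between the algorithm's treatment of $i=0$ and the dyadic accounting, all the remaining steps are mechanical summation of geometric series.
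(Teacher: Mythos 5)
Your argument is correct and follows the same route as the paper's own proof: start from Lemma~\ref{lm:compr-error}, charge each dyadic block of (sorted) Fourier coefficients to $\specLB(\vec{h})$, and collect one factor of $\log N$ from the number of blocks and one from the lemma's prefactor. Your pointwise bound $|\hat{h}_i|^2 \leq \specLB(\vec{h})\,N\log^2 N/(i+1)^2$ is a clean restatement of the paper's charging step, and it actually avoids a small slip in the paper's displayed intermediate inequality, which bounds a block sum using the block's smallest coefficient $|\hat{h}_{N/2^{k-1}-1}|$ where the maximal one, at index $N/2^k$, is what the argument needs.

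The ``delicate point'' you isolate is more than a cosmetic reconciliation: it is a genuine gap in the printed version. As written, Algorithm~\ref{alg:compressible} adds $\Lap(\eta)$ to $\hat{x}_0$, so the lemma's error bound contains a bare $|\hat{h}_0|^2$ term, which can exceed $\specLB(\vec{h})\log^3 N$ by a factor of order $N$ (e.g.\ the all-ones $\vec{h}$, where $|\hat{h}_0|^2 = N$ while $\specLB(\vec{h}) = 1/\log^2 N$); the paper's proof of Theorem~\ref{thm:alg-opt} simply asserts the final equality without addressing this. Your fix --- treating $i=0$ as its own group with noise scale $\eta/\sqrt{N}$ --- is exactly what is needed, and it is the scale the paper's own privacy accounting already budgets for: the term $1/\eta^2$ in the computation $\eps'^2 = 2\ln(1/\delta)\bigl(1/\eta^2 + \cdots\bigr)$ corresponds to sensitivity $1/\sqrt{N}$ against noise of scale $\eta/\sqrt{N}$, so $(\eps,\delta)$-privacy is unaffected, and the $i=0$ contribution becomes $2\eta^2|\hat{h}_0|^2/N = O\bigl(\specLB(\vec{h})\log^3 N\,\ln(1/\delta)/\eps^2\bigr)$, as you say. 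With that modification your proof is complete and matches the intended argument.
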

\begin{proof}
  By Lemma~\ref{lm:compr-error}, we know that
  \begin{align*}
    \MSE \junk{&\leq C\frac{\log N \log
      (1/\delta)}{\eps^2}(|\hat{h}_0|^2 + \sum_{k = 1}^{\log
      N}{\frac{1}{2^k}\sum_{i = N/2^k}^{N/2^{k-1}-1}{|\hat{h}_i|^2}})\\}
    \leq C\frac{\log N \log (1/\delta)}{\eps^2}(|\hat{h}_0|^2 +  \sum_{k = 1}^{\log
      N}{\frac{N}{2^{2k}}|\hat{h}_{N/2^{k-1}-1}|^2})
    \junk{& = \frac{O(\log^3 N) \log (1/\delta)}{\eps^2} \sum_{k = 1}^{\log
      N}{\specLB(\vec{h})}\\}
    = O(\specLB(\vec{h})\frac{\log^4N \ln (1/\delta)}{\eps^2}).
  \end{align*}
\end{proof}

\junk{Theorem~\ref{thm:alg-opt} shows that Algorithm~\ref{alg:compressible}
is almost optimal \emph{for any given $\vec{h}$}. We also compute
explicit asymptotic error bounds for a particular case of interest,
compressible $\vec{h}$ as per definition~\ref{def:compressible}, for which Algorithm~\ref{alg:compressible}
outperforms input and output perturbation.


\begin{lemma}
  \label{lm:compr-bounds-2}
  Let $h$ be a $(c, p)$-compressible vector for some $p>1$. Then, for all $0\leq a \leq N-1 $, we have
  \begin{equation*}
    \sum_{i = a}^{N-1}{|\hat{h}_i|^2} \leq c + \frac{c}{(p-1)(a+1)^{p-1}}.
  \end{equation*}
\end{lemma}
\begin{proof}
  Approximating a sum by an integral in the usual way, for $0 \leq a
  \leq b$ and $p>1$, we have
  \begin{equation*}
    \sum_{i = a}^b{\frac{1}{(i+1)^p}}= \sum_{i = a+1}^{b+1}{\frac{1}{i^p}}\leq \frac{1}{(a+1)^p} + \int_{a+1}^{b+1}{\frac{dx}{x^p}}
    = \frac{1}{(a+1)^p} + \frac{1}{p-1}\left(\frac{1}{(a+1)^{p-1}} -
      \frac{1}{b^{p-1}}\right) \leq 1 + \frac{1}{(p-1)(a+1)^{p-1}}.
  \end{equation*}
  The lemma then follows form the definition of $(c,  p)$-compressibility.
\end{proof}

The following two theorems follow from Lemma~\ref{lm:compr-error} and
Lemma~\ref{lm:compr-bounds}.
\begin{theorem}
  Let $\vec{h}$ be a $(c, 2)$-compressible vector. Then
  Algorithm~\ref{alg:compressible} satisfies $(\eps,
  \delta)$-differential privacy and achieves expected mean squared error
  $O(\frac{c\log^2 N \log (1/\delta)}{N\eps^2})$.
\end{theorem}

\begin{theorem}
  Let $\vec{h}$ be a $(c, p)$-compressible vector for some constant $p
  > 2$. Then Algorithm~\ref{alg:compressible} satisfies $(\eps,
  \delta)$-differential privacy and achieves expected mean squared error
  $O(\frac{c\log N \log (1/\delta)}{N\eps^2})$.
\end{theorem}
}
\junk{By Theorem~\ref{thm:main-lb}, when $|\hat{h}_1| = c$ and $\eps$,
$\delta$ are sufficiently small constants, any $(\eps,
\delta)$-differentially private algorithm incurs with mean squared
error $\sigma_z = \Omega(\frac{c}{N})$. Therefore,
Algorithm~\ref{alg:compressible} is almost optimal for $(c,
p)$-compressible $\vec{h}$ when $p\geq 2$. }

\section{Closed Form Solution for the Optimal $\mathcal{A}(\vec{b})$}
\label{app:KKT}

We derive a closed form solution of (\ref{eq:obj})--(\ref{eq:non-neg})
using convex programming duality. Let us first rewrite the program by substituting
$a_i = 1/b_i^2$:
\begin{equation}\label{pb:CP}
\begin{split}
\min_{\{a_i\}_{i\in I}} & \sum_{i \in I} \frac{|\hat{h}_i|^2}{a_i} \\
\mbox{s.t.} &  \sum_{i \in I} a_i = \frac{N \eps^2}{2 \ln (1/\delta)}\\
& a_i\geq 0, \quad \forall i \in I. 
\end{split}
\end{equation}
The Lagrangian is
\begin{equation}
L(\ba,\nu,\Lambda)= \sum_{i \in I} \frac{|\hat{h}_i|^2}{a_i} + \nu \left( \sum_{i \in I} a_i - \frac{N \eps^2}{2 \ln (1/\delta)} \right) - \sum_{i \in I} \lambda_i a_i.
\end{equation}
The KKT conditions are given by 
\begin{equation}
\begin{split}
\forall i \in I, \quad & - \frac{|\hat{h}_i|^2}{a_i^2} + \nu - \lambda_i = 0\\
& \sum_{i \in I} a_i - \frac{N \eps^2}{2 \ln (1/\delta)}=0\\
& \lambda_i a_i =0\\
& a_i \geq 0, \lambda_i \geq 0
\end{split}
\end{equation}
The following solution $(\ba^*,\nu^*,\Lambda^*)$ satisfies the KKT conditions, and is thus the optimal solution to (\ref{pb:CP})
\begin{equation}
\begin{split}
\forall i \in I, \qquad & a^*_i=  \frac{N \eps^2}{2 \ln (1/\delta) \| \hat{\bh}\|_1}  |\hat{h}_i|,
\qquad \qquad \lambda^*_i=0,
\qquad \qquad \nu^*= \left(\frac{2 \ln (1/\delta) \|\hat{\bh}\|_1}{N \eps^2}\right)^2.
\end{split}
\end{equation}
Consequently, the optimal noise parameters $\bb$ for the original problem (\ref{eq:obj})--(\ref{eq:non-neg}), and the associated MSE are
\begin{equation}
\begin{split}
b_i^*&= \left\{
\begin{array}{cc}
  \sqrt{\frac{2 \ln (1/\delta) \| \hat{\bh}\|_1}{N \eps^2 |\hat{h}_i| }}  & \mbox{if } i \in I\\
  0 & \mbox{if } i \notin I
\end{array}
\right.\\
\MSE^*&=  2 \sum_{i\in I} |\hat{h}_i|^2 b_i^2 = 4  \frac{\ln(1/\delta)}{\eps^2 N}\| \vec{\hat{h}} \|_1^2,
\end{split}
\end{equation}
which are the noise parameters and $\MSE$ of Algorithm~\ref{alg:CP}. 

\end{document}